\documentclass[12pt]{article}
\usepackage[margin=1.05in]{geometry}
\usepackage{amsmath,amssymb,amsthm}
\usepackage{booktabs}
\usepackage{graphicx}
\usepackage{microtype}
\usepackage{mathptmx}
\usepackage[superscript,nomove]{cite}
\usepackage[hidelinks]{hyperref}
\usepackage{caption}
\usepackage{algorithm}
\usepackage{algpseudocode}
\usepackage[framemethod=TikZ]{mdframed}
\newtheorem{theorem}{Theorem}
\newtheorem{proposition}{Proposition}
\newtheorem{lemma}{Lemma}
\newtheorem{corollary}{Corollary}
\theoremstyle{remark}
\newtheorem{remark}{Remark}

\newcommand{\E}{\mathbb{E}}
\newcommand{\Var}{\operatorname{Var}}
\newcommand{\prem}{\Pi}

\title{How Wrong Can a Rank-Based Sample-Size Calculation Be?\\ A Sharp Bound of $16/9$ for Ordinal Outcomes}
\author{%
  Akarin Phaibulpanich\thanks{%
    \textbf{Correspondence:} Akarin Phaibulpanich,
    Department of Statistics, Faculty of Commerce and Accountancy,
    Chulalongkorn University, Bangkok, Thailand.
    \textbf{Email:} akarin@cbs.chula.ac.th}%
  \\[4pt]
  \small Department of Statistics, Faculty of Commerce and Accountancy,\\
  \small Chulalongkorn University, Bangkok, Thailand%
}
\date{\today}
\newcommand{\preprintnote}{%
\begin{center}\small
\fbox{\parbox{0.92\textwidth}{\centering
\textbf{Preprint.} This manuscript has been submitted for publication and is
under review at \emph{Statistics in Medicine}. This version has not been peer
reviewed. Comments and corrections are welcome.}}
\end{center}\medskip}

\begin{document}
\maketitle
\preprintnote

\begin{abstract}
Rank-based tests need no distributional assumptions to be valid, but the sample size they require does depend on the shapes of the two outcome distributions, which are unknown at the design stage. Standard practice substitutes a variance calibrated under the null. We ask how wrong that substitution can be. Writing $\prem$ for the ratio of the true asymptotic variance of the estimated relative effect to the substituted one, we prove that under balanced allocation $\prem \le 16\theta(1-\theta)/\{2+\theta(1-\theta)\} \le 16/9$ for all ordinal distributions in any number of categories: within the first-order asymptotic calculation the required sample size can exceed the calculated one by at most $77.8\%$, and an explicit two-point family attains the bound at every effect size. The effect-specific envelope, not the constant, is the operative quantity for a design: the ceiling falls to $1.63$ at $\theta=0.65$ and $1.37$ at $\theta=0.75$, so a trial powered for a larger effect is correspondingly less exposed. The proof is elementary. In five extracted trial comparisons and $4{,}711$ generated alternatives the realized cost stays within a few percent, because the extremal configuration is one that shift-type treatment mechanisms do not produce. These facts are complementary, and fix how the bound should be used: it is a design sensitivity certificate, not an inflation factor. A protocol can report the conventional sample size alongside the largest requirement consistent with any ordinal configuration.
\end{abstract}

\medskip
\noindent\textbf{KEYWORDS}\\
sample size determination; Wilcoxon--Mann--Whitney test; ordinal outcomes; relative treatment effect; nonparametric methods; clinical trial design

\medskip

\section{Introduction}\label{sec:intro}

Consider investigators planning a randomized trial with an ordinal endpoint, the modified Rankin Scale, to be analysed with a Wilcoxon--Mann--Whitney (WMW) type test. They chose the test to avoid distributional assumptions; yet their first design task, the sample-size calculation, immediately demands one, because the variance driving the calculation depends on the shapes of the two outcome distributions and not only on the anticipated treatment effect.

This is an information constraint, not a defect in existing methodology. The exact planning variance depends on the alternative distributions $p$ and $q$, which are precisely what the trial is being run to learn; at the design stage they are unavailable in principle, not merely inconvenient to obtain. Standard practice therefore substitutes a variance calibrated under the null, computed from an anticipated (or, in blinded re-estimation, an observed) pooled distribution. The substitution is well motivated: it is exact when the two arms share a distribution (Lemma~\ref{lem:null}), and, unlike the alternative variance, it is estimable without unblinding. Away from the null the two quantities differ, and the literature is explicit that they do---Happ, Bathke, and Brunner\cite{happ2019} note the discrepancy directly. The substitution is unavoidable in practice, so the natural question is how large the resulting discrepancy can be. This paper studies that question.

\subsection*{An illustration on three categories}\label{sec:toy}

The following example illustrates the problem before notation is introduced. Suppose the outcome has three ordered categories and, in both scenarios below, treatment beats control in the same fraction of randomly paired patients---the quantity a trialist specifies at the design stage, here $5/8$. The standard calculation therefore returns almost the same sample size for both. The sizes actually required differ by about $69\%$.

\emph{(A) A dispersed shift.} Control patients fall in the three categories with probabilities $(\tfrac38,\tfrac38,\tfrac14)$ and treated patients with $(\tfrac14,\tfrac14,\tfrac12)$: treatment moves some patients up the scale, and both arms remain spread out. Planning for $80\%$ power at $\alpha=0.05$ asks for $N=148$; $139$ would have sufficed.

\emph{(B) A polarizing treatment.} Every control patient sits in the middle category, $(0,1,0)$, while treatment sends patients either up or down the scale, $(\tfrac38,0,\tfrac58)$, leaving nobody in the middle. The same calculation asks for $N=140$. The true requirement is $235$, a shortfall of two-fifths of the trial.

A calculation based only on the anticipated relative effect cannot distinguish these designs, since that effect is identical. What differs is the \emph{shape} of the two distributions, which the calculation does not use. In (B) the treated arm's outcome is essentially a coin flip between two extremes, which makes the true variance large, while the pooled data are concentrated in one category, which makes the calculation's variance estimate small. Those two movements go in opposite directions and multiply. How far apart can they be pushed? That is the question of this paper, and it has an exact answer. Section~\ref{sec:toy2} redoes the example in exact arithmetic once the notation is available.

The familiar slogan that rank-based tests ``require no distributional assumptions'' is thus true of validity and false of design. The test controls its error rate without parametric assumptions, in the nonparametric Behrens--Fisher formulation via studentized statistics \cite{brunner2000,neubert2007}. But the sample size required to detect a given relative effect
\[
\theta \;=\; P(X<Y)+\tfrac12 P(X=Y)
\]
depends on features of the two outcome distributions beyond $\theta$ itself. Happ, Bathke, and Brunner \cite{happ2019} made this precise: asymptotic power is governed by two alternative-hypothesis variance components, and pairs of distributions sharing the same $\theta$ can require materially different sample sizes. Classical bounds on the variance of the Mann--Whitney statistic given $\theta$ go back to Birnbaum and Klose \cite{birnbaum1957}.

Planning practice, however, rarely specifies these components. The most widely implemented calculation, due to Noether \cite{noether1987} with tie corrections \cite{zhao2008}, replaces the alternative variance with a null-calibrated tie variance computed from an anticipated (or, in blinded re-estimation \cite{kieser2003}, an observed pooled) category distribution. Existing work bounds the variance of the Mann--Whitney statistic given $\theta$; we bound the \emph{ratio} of that variance to the null-calibrated variance the calculation substitutes for it, and call that ratio the planning premium.

\begin{table}[b]
\centering
\caption{What the closest prior results establish, and what remains. $\sigma_N^2$ is the variance of $\widehat\theta$; $v(m)$ is the tie-corrected null variance computed from the pooled distribution, which is what the standard sample-size calculation uses.}
\label{tab:novelty}
\small
\begin{tabular}{@{}p{3.0cm}p{4.5cm}p{4.9cm}@{}}
\toprule
Source & Establishes & Leaves open\\
\midrule
Birnbaum--Klose \cite{birnbaum1957}, Rustagi \cite{rustagi1961} & sharp bounds on $\sigma_N^2$ given $\theta$; continuous $F,G$ throughout & ties absent; no comparison to the planning variance $v(m)$\\[2pt]
Bamber \cite{bamber1975}, Brunner--Konietschke \cite{brunner2025} & exact variance identity with ties, including $\theta(1-\theta)-\tau/4$ & an identity, not a bound on a planning ratio\\[2pt]
Happ et al.\ \cite{happ2019} & the components $\sigma_1^2,\sigma_2^2$; optimal allocation; notes the null-for-alternative substitution & does not quantify the resulting error\\[2pt]
Sch\"uurhuis et al.\ \cite{schuurhuis2025} & a test built on $\sigma_N^2$ relative to its Birnbaum--Klose cap $\theta(1-\theta)/m$ & denominator is a cap in $\theta$ alone, used at analysis stage, not $v(m)$\\[2pt]
P\"ohlmann et al.\ \cite{poehlmann2024} & notes the null substitution is conservative for large effects; patches it with simulation-fitted variance functions & no bound on the error, so no way to know when a patch is needed\\
\midrule
\textbf{This paper} & \multicolumn{2}{@{}p{9.6cm}@{}}{sharp envelope for $\prem=(\text{alternative variance})/v(m)$, the ratio the standard calculation actually incurs, with an equality characterization within the two-atom class}\\
\bottomrule
\end{tabular}
\end{table}

\subsection*{Relation to existing work}

Two strands of literature bear directly on this paper, and it is important to be precise about what each already supplies.

\emph{Variance bounds given $\theta$.} That the variance of the Mann--Whitney statistic is constrained by $\theta$ alone is classical. Birnbaum and Klose \cite{birnbaum1957} derived sharp bounds and Rustagi \cite{rustagi1961} sharpened them, obtaining $\Var(\widehat\theta)\le\theta(1-\theta)/\min(n_1,n_2)$, still in current use; both assume $F,G$ continuous, so ties play no role, whereas the tie structure is central here. Bamber \cite{bamber1975} gave an unbiased variance estimator valid under ties and Brunner and Konietschke \cite{brunner2025} its rank-based form, with an identity whose last term is our Lemma~\ref{lem:kernel}. Schüürhuis, Konietschke, and Brunner \cite{schuurhuis2025} build a test around a \emph{ratio} of variances, the true variance to its Birnbaum--Klose maximum $\theta(1-\theta)/m$; that is the closest object in the literature to the premium, but its denominator is a cap depending on $\theta$ alone, and it is used at the analysis stage.

\emph{Rank-based sample-size planning.} Noether \cite{noether1987} gave the calculation used in practice, with tie corrections by Zhao et al.\ \cite{zhao2008} and ordered-categorical treatments by Whitehead \cite{whitehead1993}; Kieser and Friede \cite{kieser2003} developed blinded re-estimation. Happ, Bathke, and Brunner \cite{happ2019} supply the components $\sigma_1^2,\sigma_2^2$, characterize the optimal allocation, and note explicitly that Noether's formula substitutes the null variance for the alternative variance, so that a discrepancy is to be expected. Pöhlmann, Brunner, and Konietschke \cite{poehlmann2024} extend planning to rank-based multiple contrast tests, and their treatment of the variance parameters is the closest precedent for the concern of this paper. Their Method~A substitutes the null value of the variance, $(N+1)/(12N)$ under continuity, and they observe that this is conservative and yields overpowered trials when effects are large; their Methods~B and~C replace it by heuristic parabolic functions $\sigma_i^2(\psi_i)=-\kappa(\psi_i-\tfrac12)^2+(N+1)/(12N)$, with $\kappa$ set to $0.55$ or $0.2$ by simulation. The inadequacy of the null substitution is thus recognized and is already being patched empirically. What we have not found is a statement of how large the error can be, which is what tells a planner whether a patch is needed and how much correction could ever be required.

So the ingredients are all present: the components, the awareness of the substitution, and a classical cap on the numerator. What we have not found is the second step---a sharp comparison of the alternative variance against the \emph{tie-corrected null} variance $v(m)$ computed from the pooled distribution, which is the quantity the standard calculation actually uses. That step does not follow from the Birnbaum--Klose cap, because it requires controlling $v(m)$ jointly with $\theta$ and $\tau$ (our $\Psi\le0$), and the extremal configuration it produces, an atom facing a two-point split, has no continuous analogue. The contribution claimed here is the premium $\prem$ as a planning-stage object, its sharp envelope, an attaining family, and an equality characterization within the two-atom class; the classical cap on the numerator is not ours.

\begin{mdframed}[linewidth=0.7pt,linecolor=black!55,backgroundcolor=black!3,
  innertopmargin=7pt,innerbottommargin=7pt,roundcorner=2pt]
\textbf{Main Theorem (informal).} Under balanced allocation, the premium equals $1$ when the two arms share a non-degenerate distribution; it never exceeds $16\theta(1-\theta)/\{2+\theta(1-\theta)\}$, hence never exceeds $16/9$; and it attains that ceiling at a point mass faced by a two-point comparator straddling it, the only equality configurations among those with at most two atoms per arm.
\end{mdframed}

\medskip
\noindent\emph{Why a finite bound is plausible.} A large premium needs two things at once: a large alternative variance and a small pooled planning variance. These pull against each other. The numerator is large only when one arm's placement scores are spread out, which requires a dispersed comparator; the denominator is small only when the \emph{pooled} distribution is concentrated. Piling both arms onto one category shrinks the numerator to zero; spreading them out inflates the denominator. One configuration balances the two: one arm supplies concentration (a point mass), the other spread (a two-point split), and because the atoms sit on opposite sides of the point mass they generate maximal score variance while adding only two thin categories to the pooled distribution.

\noindent\emph{Interpreting the constant.} Along that family the numerator is exactly $s=\theta(1-\theta)$ and the pooled tie variance exactly $v(m)=(2+s)/32$, so the premium is $16s/(2+s)$, increasing in $s$. Hoeffding's bound caps $s$ at $\tfrac14$, attained at $\theta=\tfrac12$, giving $16/9$. The $2$ in the denominator is the mass the degenerate arm contributes to the pooled distribution, which cannot be dispersed without destroying the numerator.

\medskip

We give a three-part answer for ordinal outcomes with $K$ categories under balanced allocation.

\begin{enumerate}
\item \textbf{The substitution is exact at the null and sharply bounded away from it.} We define the premium $\prem$ (Section~\ref{sec:setup}), show $\prem=1$ whenever the two arms share a non-degenerate distribution (Lemma~\ref{lem:null}), and prove the pointwise envelope of the Main Theorem for every pair of ordinal distributions in any number of categories (Theorem~\ref{thm:global}), attained at every effect size by an explicit two-point family (Theorem~\ref{thm:main}). The proof reduces the infinite-dimensional problem, via a Hoeffding-projection bound, a convexity argument, and a symmetry group, to three two-variable inequalities with hand-checkable decompositions. Within the two-atom class, equality holds only when one arm is a point mass and the other straddles it (Theorem~\ref{thm:equality}), so the extremal configuration is structurally identifiable. Table~\ref{tab:envelope} reports the ceiling at each effect size in both conventional forms, the excess over the calculated size $\prem-1$ and the variance understatement $1-1/\prem$; we quote the former throughout.

\item \textbf{Blinded pooling is conservative for the tie variance itself, with an exact distortion identity.} Pooling two shifted ordinal distributions always makes the blinded data look less tied than the average arm (Proposition~\ref{prop:pool}); for balanced allocation the distortion is exactly $\tfrac1{32}\sum_k (p_k-q_k)^2(p_k+q_k)$, second order in the treatment effect.
\item \textbf{In the settings we could examine, all three prices are small.} For five extracted trial comparisons, and across generated alternatives spanning shift-type and non-shift mechanisms, the realized premium stays within a few percent of one and is usually conservative; for IST-3 as randomised it is $0.9997$ (Sections~\ref{sec:empirical}--\ref{sec:nonshift}). The allocation loss\cite{happ2019} never exceeds $6.4\%$ and the pooled-tie distortion $4.7\%$, and the asymptotic premium predicts realized power to within $2.1$ percentage points from $N=100$ upward (Section~\ref{sec:finite}). Section~\ref{sec:limits} states the scope of this evidence.
\end{enumerate}

The gap between the worst case and the realized values is itself informative. Marginal constraints---caps on the concentration of either arm, even unimodality of both---close only part of it (the Supporting Information). What closes it is the coupling between arms induced by the treatment mechanism, since the extremal configuration requires one arm concentrated at a single category facing an arm polarized on both sides of it. Distribution-free validity does not imply distribution-free planning, but for ordinal endpoints the price of not knowing the shape is capped, and in the settings examined here it is a few percent.

Section~\ref{sec:setup} fixes notation and defines the premium. Section~\ref{sec:theory} contains the theoretical results with proofs. The Supporting Information reports the global numerical search and the effect of interpretable constraints. Section~\ref{sec:empirical} reports the clinical-scenario panel, including allocation and blinded re-estimation diagnostics. Section~\ref{sec:disc} discusses implications for trial design and for the teaching of nonparametric methods, and states limitations.

\section{Setup}\label{sec:setup}

\subsection{Relative effect and variance components}

Let $X\sim p$ and $Y\sim q$ be independent ordinal outcomes on categories $1,\dots,K$, with $p,q$ in the simplex $\Delta_{K-1}$. Define the mid-probability scores
\[
a(x)\;=\;P(Y>x)+\tfrac12 P(Y=x),
\qquad
b(y)\;=\;P(X<y)+\tfrac12 P(X=y),
\]
so that the relative effect satisfies $\theta=\E\,a(X)=\E\,b(Y)=P(X<Y)+\tfrac12P(X=Y)$, with $\theta=\tfrac12$ under exchangeability. The alternative-hypothesis variance components are
\[
\sigma_1^2=\Var\{a(X)\},\qquad \sigma_2^2=\Var\{b(Y)\}.
\]
For allocation fraction $\lambda=n_1/N$ with $n_1$ observations from $p$, the estimated relative effect $\widehat\theta$ satisfies $N\Var(\widehat\theta)\to V_{\mathrm{true}}(\lambda)$ with
\begin{equation}\label{eq:vtrue}
V_{\mathrm{true}}(\lambda)\;=\;\frac{\sigma_1^2}{\lambda}+\frac{\sigma_2^2}{1-\lambda},
\end{equation}
so that the total sample size required for power $1-\beta$ at two-sided level $\alpha$ is, to first order, $N=(z_{1-\alpha/2}+z_{1-\beta})^2\,V_{\mathrm{true}}(\lambda)/(\theta-\tfrac12)^2$; see \cite{happ2019} for the general framework and \cite{brunner2000,neubert2007} for the corresponding studentized inference.

\subsection{Null-calibrated planning and the premium}

For an ordinal distribution $s$ define the tie-corrected null variance
\[
v(s)\;=\;\frac{1-\sum_{k=1}^K s_k^3}{12},
\]
the variance of the mid-distribution transform under a common distribution $s$ (Lemma~\ref{lem:null}); $v(s)=\tfrac1{12}$ in the untied continuous limit. Noether-type planning with ties \cite{noether1987,zhao2008} replaces $V_{\mathrm{true}}$ by the null-calibrated
\begin{equation}\label{eq:vplan}
V_{\mathrm{plan}}(\lambda)\;=\;\frac{v(m)}{\lambda(1-\lambda)},
\qquad
m=\lambda p+(1-\lambda)q,
\end{equation}
where in prospective planning $m$ is an anticipated pooled distribution and in blinded re-estimation it is the observed pooled distribution, which is estimable without unblinding \cite{kieser2003}. The object of study is the \emph{planning premium}
\begin{equation}\label{eq:prem}
\prem(\lambda)
\;=\;
\frac{V_{\mathrm{true}}(\lambda)}{V_{\mathrm{plan}}(\lambda)}
\;=\;
\frac{(1-\lambda)\,\sigma_1^2+\lambda\,\sigma_2^2}{v(m)}.
\end{equation}
at fixed effect, and within the first-order normal approximation, $\prem$ is the multiplicative sample-size correction: the factor by which the null-calibrated calculation under-sizes ($\prem>1$) or over-sizes ($\prem<1$) the trial: a premium of $\prem$ corresponds to $100(\prem-1)\%$ additional patients required, equivalently a variance understatement of $100(1-1/\prem)\%$. Unless stated otherwise we take balanced allocation $\lambda=\tfrac12$, for which $\prem=(\sigma_1^2+\sigma_2^2)/\{2v(m)\}$ and $m=(p+q)/2$.

Two distinct questions must be kept separate. The premium \eqref{eq:prem} isolates the cost of substituting the null-calibrated variance for the true alternative variance; it is not the local-approximation error of the Noether formula at large effects, which is a property of the $(\theta-\tfrac12)^{-2}$ term and is present even when the variance is correctly specified. All statements below concern the former.

\subsection{The three-category example in exact arithmetic}\label{sec:toy2}

The example of Section~\ref{sec:toy} can now be stated exactly; both configurations have $\theta=\tfrac58$. In (A), $p=(\tfrac38,\tfrac38,\tfrac14)$ and $q=(\tfrac14,\tfrac14,\tfrac12)$ give scores $a=(\tfrac78,\tfrac58,\tfrac14)$ and $b=(\tfrac3{16},\tfrac9{16},\tfrac78)$, hence $\sigma_1^2=\tfrac{15}{256}$, $\sigma_2^2=\tfrac{41}{512}$, $m=(\tfrac5{16},\tfrac5{16},\tfrac38)$, $v(m)=\tfrac{605}{8192}$ and $\prem=\tfrac{568}{605}=0.939$. In (B), $p=(0,1,0)$ and $q=(\tfrac38,0,\tfrac58)$ make $a\equiv\tfrac58$ constant, so $\sigma_1^2=0$, while $b=(0,\tfrac12,1)$ gives $\sigma_2^2=\tfrac{15}{64}$; now $m=(\tfrac3{16},\tfrac12,\tfrac5{16})$, $v(m)=\tfrac{143}{2048}$ and $\prem=\tfrac{240}{143}=1.678$, which attains the envelope of Theorem~\ref{thm:global} at $\theta=\tfrac58$. Moving from (A) to (B) roughly triples $\sigma_2^2$, since $b$ becomes an indicator rather than a spread-out score, while shrinking $v(m)$, since half the pooled mass sits in one category.

\section{Theory}\label{sec:theory}

\subsection{Behaviour under the null}

\begin{lemma}\label{lem:null}
If $p=q=s$ then $\sigma_1^2=\sigma_2^2=v(s)$. If moreover $v(s)>0$---that is, $s$ is not a one-point distribution---then $\prem(\lambda)=1$ for every $\lambda\in(0,1)$.
\end{lemma}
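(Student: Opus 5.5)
When $p=q=s$, the two score functions coincide. Indeed $a(x)=P(Y>x)+\tfrac12P(Y=x)$ with $Y\sim s$, and $b(y)=P(X<y)+\tfrac12P(X=y)$ with $X\sim s$. Writing $S_k=\sum_{j\le k}s_j$ for the cumulative distribution, we get $b(k)=S_{k-1}+\tfrac12 s_k$ (the mid-distribution transform) and $a(k)=1-b(k)$. Since $a(X)=1-b(X)$ and $X,Y$ have the same law, $\sigma_1^2=\Var\{b(X)\}=\sigma_2^2$. It therefore suffices to compute $\Var\{b(X)\}$ for $X\sim s$ and show it equals $(1-\sum_k s_k^3)/12$.

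To compute the variance, I will use the randomization representation of the mid-distribution transform. Let $U\sim\mathrm{Uniform}(0,1)$ be independent of $X$, and set $W=S_{X-1}+U s_X$. Then $W$ is exactly Uniform$(0,1)$: conditional on $X=k$ it is uniform on $(S_{k-1},S_k]$, an interval of length $s_k$, and these intervals partition $(0,1]$. We also have $E[W\mid X]=b(X)$ and $\Var(W\mid X=k)=s_k^2/12$. The law of total variance then gives
\[
\tfrac1{12}=\Var(W)=\Var\{b(X)\}+\E\{\Var(W\mid X)\}=\Var\{b(X)\}+\tfrac1{12}\sum_k s_k^3,
\]
so $\Var\{b(X)\}=v(s)$. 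This is the one step with content, and the decomposition is the cleanest route. The fallback is a direct expansion of $\E b(X)^2$ using $\E b(X)=\tfrac12$, which reduces to the identity $(\sum s_k)^3=1$ but is messier.

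For the premium, take $p=q=s$, so $m=\lambda s+(1-\lambda)s=s$ for every $\lambda$. By \eqref{eq:prem} the numerator is $(1-\lambda)v(s)+\lambda v(s)=v(s)$ and the denominator is $v(m)=v(s)$. Hence $\prem(\lambda)=1$ whenever $v(s)>0$.

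Finally, $v(s)>0$ exactly when $\sum s_k^3<1$. Since $s_k^3\le s_k$ with equality only when $s_k\in\{0,1\}$, this holds exactly when $s$ is not a point mass. That settles the parenthetical characterization. I expect no real obstacle beyond the variance identity.
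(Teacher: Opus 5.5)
Your proof is correct, and it reaches the key identity $\Var\{b(X)\}=v(s)$ by a different route from the paper.

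The paper computes the second moment directly. With $c_x=P(X<x)$ it expands $\E\,b(Y)^2=\sum_x s_xc_x^2+\sum_x s_x^2c_x+\tfrac14\sum_x s_x^3$, and closes with the telescoping identity $\sum_x\{(c_x+s_x)^3-c_x^3\}=1$. This is essentially your ``fallback.''

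You instead realize $b(X)$ as the conditional mean of the randomized probability integral transform $W=S_{X-1}+Us_X\sim\mathrm{Uniform}(0,1)$, and read off the result from the law of total variance.

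The two computations are closer than they look. Evaluating $\E W^2=\tfrac13$ interval by interval, $\sum_k\int_{S_{k-1}}^{S_k}w^2\,dw$, is exactly the paper's telescoping sum.

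Your version still buys two things:
\begin{itemize}
\item \emph{Interpretation.} The tie correction $\sum_k s_k^3/12$ appears as the average within-category variance $s_k^2/12$ of the uniform that is lost by collapsing each category to its midpoint. This makes both $v(s)\le\tfrac1{12}$ and the cubic form of the correction transparent.
\item \emph{Generality.} The argument applies verbatim to any distribution on the line, with $s_k$ its atoms.
\end{itemize}

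The paper's route, in turn, is purely algebraic, needs no auxiliary randomization, and stays entirely within the finite-category notation used later. It obtains $\E\,b=\tfrac12$ separately (``by symmetry''), which in your argument is simply $\E W=\tfrac12$.

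Your treatment of the premium ($m=s$ for every $\lambda$, so numerator and denominator both equal $v(s)$) matches the paper. Your proof that $v(s)>0$ exactly when $s$ is not a point mass, via $s_k^3\le s_k$, fills in a step the paper leaves implicit.
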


\begin{remark}\label{rem:degen}
The proviso is not vacuous: if $p=q=e_j$ then $\sigma_1^2=\sigma_2^2=v(s)=0$ and $\prem=0/0$ is undefined. We adopt the convention $\prem=1$ there by continuous extension, which is the value approached along any path with $p=q$. Note also that Lemma~\ref{lem:null} is a one-way statement: we do not claim, and do not use, the converse that $\prem=1$ implies $p=q$.
\end{remark}

\begin{proof}
Under $p=q=s$, $b(y)=P(X<y)+\tfrac12P(X=y)$ is the mid-distribution transform of $s$ evaluated at $y\sim s$, and $a=1-b$, so it suffices to compute $\Var\{b(Y)\}$. Write $c_x=P(X<x)$. Then $\E\,b(Y)=\sum_x s_x(c_x+\tfrac{s_x}{2})=\tfrac12$ by symmetry, and
\[
\E\,b(Y)^2=\sum_x s_x c_x^2+\sum_x s_x^2 c_x+\tfrac14\sum_x s_x^3 .
\]
Since $(c_x+s_x)^3-c_x^3=3c_x^2s_x+3c_xs_x^2+s_x^3$ telescopes to $1$ over $x$, we get $\sum_x s_xc_x^2+\sum_x s_x^2c_x=\tfrac13(1-\sum_x s_x^3)$, whence $\Var\{b(Y)\}=\tfrac13(1-\sum s_x^3)+\tfrac14\sum s_x^3-\tfrac14=(1-\sum_x s_x^3)/12=v(s)$.
\end{proof}

The null-calibrated planning variance is therefore not an ad hoc approximation: it is the \emph{exact} variance at the boundary of the hypothesis, and $\prem$ measures how far the alternative drifts from it.

\subsection{Bounding the numerator}

The worked example of Section~\ref{sec:toy} suggests that a large planning premium requires two opposing features at once: an alternative variance that is large and a pooled planning variance that is small. This is the strategy of the argument that follows. Proposition~\ref{prop:hoeffding} shows that the first quantity is capped universally, by $\theta$ alone; the remainder of the argument concerns how far the second can be driven down without destroying the first. Theorem~\ref{thm:global} shows the two cannot be separated by more than a factor $16/9$.

\begin{proposition}\label{prop:hoeffding}
For all $p,q$: $\;\sigma_1^2+\sigma_2^2\;\le\;\theta(1-\theta)\;\le\;\tfrac14$.
\end{proposition}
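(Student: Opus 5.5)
The plan is to obtain the bound from the Hoeffding (ANOVA) decomposition of the Mann--Whitney kernel. Define the kernel
\[
h(x,y)=\mathbf 1\{x<y\}+\tfrac12\mathbf 1\{x=y\}.
\]
Its projections are exactly the scores of Section~\ref{sec:setup}: $\E\,h(x,Y)=a(x)$, $\E\,h(X,y)=b(y)$, and $\E\,h(X,Y)=\theta$. The strategy is to show that $\sigma_1^2+\sigma_2^2$ is at most $\Var\{h(X,Y)\}$, and then to compute $\Var\{h(X,Y)\}$ exactly in terms of $\theta$ and the tie probability.

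\emph{Step 1 (decomposition).} Write
\[
h(X,Y)-\theta=\{a(X)-\theta\}+\{b(Y)-\theta\}+r(X,Y),
\]
where $r$ is defined as the remainder. Then $\E[r(X,Y)\mid X]=a(X)-\theta-\{a(X)-\theta\}-0=0$, and by the same computation $\E[r(X,Y)\mid Y]=0$. Because $X$ and $Y$ are independent, $a(X)$ and $b(Y)$ are uncorrelated. The vanishing conditional means make $r$ orthogonal both to every function of $X$ and to every function of $Y$. Hence
\[
\Var\{h(X,Y)\}=\sigma_1^2+\sigma_2^2+\E\,r(X,Y)^2\ \ge\ \sigma_1^2+\sigma_2^2 .
\]

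\emph{Step 2 (exact kernel variance).} Since the indicators $\mathbf 1\{x<y\}$ and $\mathbf 1\{x=y\}$ are disjoint, squaring gives $h^2=\mathbf 1\{X<Y\}+\tfrac14\mathbf 1\{X=Y\}$. With $\tau=P(X=Y)$ this yields $\E\,h^2=\theta-\tfrac14\tau$, and therefore
\[
\Var\{h(X,Y)\}=\theta(1-\theta)-\tau/4\le\theta(1-\theta).
\]
This identity is presumably the content of Lemma~\ref{lem:kernel}, the Bamber / Brunner--Konietschke identity quoted in the introduction. Combining it with Step 1 gives $\sigma_1^2+\sigma_2^2\le\theta(1-\theta)$. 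The final inequality $\theta(1-\theta)\le\tfrac14$ is AM--GM, with equality at $\theta=\tfrac12$.

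There is no real obstacle here. The only point requiring care is verifying the orthogonality in Step 1 with ties present, which amounts to checking that the projections of the tie-adjusted kernel are exactly the mid-probability scores $a$ and $b$; that holds by their definitions. As a by-product, the argument shows that equality forces $\tau=0$ and $r\equiv0$ almost surely. It also shows that the slack in the numerator bound is exactly $\tau/4+\E\,r^2$, which should be useful later for the equality analysis.
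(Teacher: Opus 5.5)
Your proof is correct and follows the paper's route: the Hoeffding decomposition of the kernel $h$ gives $\sigma_1^2+\sigma_2^2\le\Var(h)$, and $\Var(h)$ is then bounded. The only difference is the last step: the paper uses $h\in[0,1]\Rightarrow h^2\le h$, whereas you compute $\E h^2=\theta-\tau/4$ exactly, which is the paper's later Lemma~\ref{lem:kernel} and directly gives the sharper bound $\theta(1-\theta)-\tau/4$.
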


\begin{proof}
Let $h(X,Y)=\mathbf 1\{X<Y\}+\tfrac12\mathbf 1\{X=Y\}\in[0,1]$, so $\theta=\E h$ and $a(X)=\E[h\mid X]$, $b(Y)=\E[h\mid Y]$. The Hoeffding decomposition \cite{hoeffding1948} gives
\[
\Var(h)\;=\;\Var\{a(X)\}+\Var\{b(Y)\}+\E\bigl[\{h-a(X)-b(Y)+\theta\}^2\bigr]\;\ge\;\sigma_1^2+\sigma_2^2,
\]
and since $h\in[0,1]$ implies $h^2\le h$, $\Var(h)=\E h^2-\theta^2\le\theta-\theta^2$.
\end{proof}

This bound is classical, not ours; see Section~\ref{sec:intro} and \cite{birnbaum1957,brunner2025}. We record it in this form only because it is the shape we need.

Proposition~\ref{prop:hoeffding} already explains the qualitative finding of this paper. The numerator of $\prem(\tfrac12)$ is capped by $\theta(1-\theta)/2\le\tfrac18$ regardless of shape, so the premium can be large only where the pooled tie variance $v(m)$ is simultaneously forced to be small---and, as the next results show, the two requirements are nearly incompatible. Related sharp bounds for the continuous Mann--Whitney variance as a function of $\theta$ are classical \cite{birnbaum1957}; Proposition~\ref{prop:hoeffding} is the elementary form suited to our ratio.

\subsection{Behaviour of the planning variance under pooling}

\begin{proposition}\label{prop:pool}
For any $\lambda\in(0,1)$, $v(\lambda p+(1-\lambda)q)\ge \lambda v(p)+(1-\lambda)v(q)$, with equality iff $p=q$. For $\lambda=\tfrac12$,
\begin{equation}\label{eq:pool}
v\!\Bigl(\frac{p+q}{2}\Bigr)-\frac{v(p)+v(q)}{2}
\;=\;
\frac{1}{32}\sum_{k=1}^K (p_k-q_k)^2\,(p_k+q_k)\;\ge\;0 .
\end{equation}
\end{proposition}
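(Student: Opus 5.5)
Since $v(s)=(1-\sum_k s_k^3)/12$, the statement is a coordinatewise convexity inequality for the cube function. Write $m=\lambda p+(1-\lambda)q$. Then
\[
v(m)-\lambda v(p)-(1-\lambda)v(q)=\frac1{12}\sum_{k=1}^K\Bigl\{\lambda p_k^3+(1-\lambda)q_k^3-\bigl(\lambda p_k+(1-\lambda)q_k\bigr)^3\Bigr\},
\]
because the constant terms cancel ($\lambda+(1-\lambda)=1$). So the whole proposition reduces to the behaviour of the Jensen gap of $t\mapsto t^3$ on $[0,1]$, one category at a time.

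For the inequality, I will compute each summand exactly rather than just invoke convexity, since this also settles the equality case. For $x,y\ge0$,
\[
\lambda x^3+(1-\lambda)y^3-\bigl(\lambda x+(1-\lambda)y\bigr)^3=\lambda(1-\lambda)(x-y)^2\bigl\{(1+\lambda)x+(2-\lambda)y\bigr\}.
\]
I will verify this by expanding. As a check, at $\lambda=\tfrac12$ it gives $\tfrac14(x-y)^2\cdot\tfrac32(x+y)=\tfrac38(x-y)^2(x+y)$.

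Each summand is then nonnegative, because $x,y\ge0$ and the bracketed linear factor has positive coefficients. The sum vanishes only if every term vanishes. A term vanishes only if $p_k=q_k$, or if $(1+\lambda)p_k+(2-\lambda)q_k=0$, which forces $p_k=q_k=0$. So equality forces $p=q$, and the converse is immediate.

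For $\lambda=\tfrac12$, multiplying by the prefactor $\tfrac1{12}$ gives $\tfrac1{12}\cdot\tfrac38=\tfrac1{32}$, which is exactly \eqref{eq:pool}.

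The only genuine work is confirming the cubic factorization. I will do this by noting that the left side, as a polynomial in $x$ with $y$ fixed, vanishes to second order at $x=y$: it is zero there, and its derivative $3\lambda x^2-3\lambda(\lambda x+(1-\lambda)y)^2$ is also zero there. Hence it is $(x-y)^2$ times a linear form. That linear form is then identified by matching the $x^3$ and $y^3$ coefficients, which are $\lambda-\lambda^3=\lambda(1-\lambda)(1+\lambda)$ and $(1-\lambda)-(1-\lambda)^3=\lambda(1-\lambda)(2-\lambda)$ respectively.
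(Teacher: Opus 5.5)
Your proof is correct and takes essentially the paper's route: both reduce the claim to the coordinatewise Jensen gap of $t\mapsto t^3$, and your $\lambda=\tfrac12$ specialization is exactly the paper's scalar identity $\tfrac{a^3+b^3}{2}-\bigl(\tfrac{a+b}{2}\bigr)^3=\tfrac38(a-b)^2(a+b)$. The one difference is that the paper gets the general-$\lambda$ inequality from convexity alone and leaves the equality case to unstated strict convexity, whereas your general-$\lambda$ factorization $\lambda(1-\lambda)(x-y)^2\{(1+\lambda)x+(2-\lambda)y\}$ (which checks out) gives the inequality, the equality case and the $\lambda=\tfrac12$ formula in one step.
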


\begin{proof}
The first claim is convexity of $t\mapsto t^3$ on $[0,1]$ applied coordinatewise inside $v$. For \eqref{eq:pool}, the scalar identity
$\tfrac{a^3+b^3}{2}-\bigl(\tfrac{a+b}{2}\bigr)^3=\tfrac38(a-b)^2(a+b)$
(direct expansion) applied to $(a,b)=(p_k,q_k)$, summed over $k$ and divided by $12$, gives the result.
\end{proof}

Two consequences. A blinded interim estimate of the pooled tie pattern is biased toward \emph{less} concentration than the average arm, so blinded tie-corrected re-estimation overstates $v$ and over-sizes rather than under-sizes; and the distortion is second order in $p-q$, i.e.\ $O(\|p-q\|^2)$, vanishing at the null. Converting that rate to $O\{(\theta-\tfrac12)^2\}$ requires a path along which $\theta-\tfrac12$ has non-vanishing first derivative. Note what Proposition~\ref{prop:pool} does \emph{not} say: conservatism for $v$ does not imply conservatism for the alternative components $\sigma_1^2,\sigma_2^2$ that govern power. Quantifying that gap is the role of the premium.

\subsection{The extremal family}

\begin{theorem}\label{thm:main}
Let $K\ge 3$ and $\lambda=\tfrac12$.
\begin{enumerate}
\item[(i)] \textup{(Extremal family.)} For $1\le j\le K-2$ and $t\in(0,1)$, the pair
\[
p=e_j,\qquad q=(1-t)\,e_{j-1}+t\,e_{j+1}
\]
\textup{(}$e_j$ the point mass at category $j$\textup{)} has $\theta=t$, $\sigma_1^2=0$, $\sigma_2^2=t(1-t)$, and, writing $s=\theta(1-\theta)$,
\begin{equation}\label{eq:family}
\prem\;=\;\frac{16\,s}{2+s}\;=\;\frac{16\,\theta(1-\theta)}{2+\theta(1-\theta)} .
\end{equation}
\item[(ii)] \textup{(Sharp bound in the degenerate-arm class.)} Over all pairs $(p,q)\in\Delta_{K-1}^2$ in which at least one arm is a point mass,
\[
\sup\,\prem\;=\;\frac{16}{9},
\]
attained exactly, and only, by the family in \textup{(i)} with $t=\tfrac12$ \textup{(}up to position and the arm-reversal symmetry\textup{)}.
\end{enumerate}
\end{theorem}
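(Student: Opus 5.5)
The plan has two parts. Part (i) is a direct computation with the scores. Part (ii) reduces the degenerate-arm class to a two-parameter polynomial inequality that factors by hand. Throughout I read the family as having its centre $j$ with both neighbours $j\pm1$ in range; that is, I take $2\le j\le K-1$.

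\emph{Part (i).} Take $p=e_j$. Then $a(j)=P(Y>j)+\tfrac12P(Y=j)=t$. Since $X$ sits at $j$ almost surely, $a(X)\equiv t$, so $\theta=t$ and $\sigma_1^2=0$. For $y=j-1$ the score is $b(y)=0$, and for $y=j+1$ it is $b(y)=1$. So $b(Y)$ is Bernoulli$(t)$ and $\sigma_2^2=t(1-t)=s$.

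The pooled distribution puts mass $\tfrac12$ at $j$, $(1-t)/2$ at $j-1$ and $t/2$ at $j+1$. Using $t^3+(1-t)^3=1-3s$, we get $\sum m_k^3=(2-3s)/8$, hence $v(m)=(2+s)/32$. With $\lambda=\tfrac12$ the premium is $\prem=\sigma_2^2/\{2v(m)\}=16s/(2+s)$, which is \eqref{eq:family}.

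\emph{Part (ii), reduction.} First I use the arm-reversal symmetry to assume the degenerate arm is $p=e_j$. The symmetry swaps $p$ and $q$ and reverses the category order. This maps $\theta$ to $\theta$, swaps $\sigma_1^2$ and $\sigma_2^2$, and leaves $m$ up to relabelling, so $\prem$ is invariant. If both arms are point masses, $\prem$ is either $0$ or equal to $1$ by the convention of Remark~\ref{rem:degen}, so these cases are harmless.

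Now write $\alpha,\beta,\gamma$ for the mass of $q$ below $j$, at $j$, and above $j$. As in (i), $\sigma_1^2=0$. The score $b$ takes the values $0,\tfrac12,1$ on these three blocks, so $\sigma_2^2=\gamma+\beta/4-(\gamma+\beta/2)^2$, which depends on $q$ only through $(\alpha,\beta,\gamma)$.

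The pooled mass at $j$ is $(1+\beta)/2$. Off $j$, the sum of cubes is maximised, and so $v(m)$ is minimised, by putting all of $\alpha$ in a single category and all of $\gamma$ in a single category, because $t\mapsto t^3$ is superadditive. So it suffices to treat the three-atom case, where
\[
\prem=\frac{48\,\sigma_2^2}{8-(1+\beta)^3-\alpha^3-\gamma^3}.
\]
Any split of $\alpha$ or $\gamma$ across two or more categories strictly lowers $\prem$.

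\emph{Part (ii), the inequality.} Parametrise by $u=1-\beta$ and $d=\gamma-\alpha$. Then
\[
\sigma_2^2=\frac{u-d^2}{4},\qquad \alpha^3+\gamma^3=\frac{u^3+3ud^2}{4}.
\]
The claim $\prem\le16/9$ becomes
\[
27(u-d^2)\le 32-4(2-u)^3-u^3-3ud^2 .
\]
The $d^2$ terms contribute $-(27-3u)d^2\le0$ to the left-minus-right difference, so the case $d=0$ is the worst. At $d=0$ the slack is
\[
32-4(2-u)^3-u^3-27u=3u(1-u)(7-u)\ge0 \quad\text{on } [0,1].
\]
Equality therefore forces $d=0$ together with $u\in\{0,1\}$. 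The value $u=0$ means $q=e_j=p$, which is the degenerate null case and is excluded. The value $u=1$ gives $\beta=0$ and $\alpha=\gamma=\tfrac12$, split over single categories, one on each side of $j$. That is exactly the family in (i) with $t=\tfrac12$. Since the non-adjacent position of the side atoms does not change any of the quantities above, "up to position" covers it.

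I expect the main care point to be the reduction step, namely justifying that merging off-$j$ mass increases $\prem$ strictly while leaving $\sigma_2^2$ unchanged; this is what pins down uniqueness. The polynomial step itself factors cleanly.
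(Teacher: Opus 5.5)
Your proposal is correct and follows essentially the paper's route. Part (i) is the same direct score computation. Part (ii) matches the self-contained Supporting Information proof: it also reduces by symmetry to $p=e_j$, merges the off-$j$ mass into single atoms by the power-sum inequality, and notes that the asymmetry variable (your $d^2$, the paper's $w=BC$) enters monotonically, leaving the same residual $3u(1-u)(7-u)$, which is the paper's $3q_0(1-q_0)(q_0+6)/32$ with $q_0=1-u$. Two of your details tighten points the paper leaves loose: using strict superadditivity to exclude split side-atoms (the paper only says the merge ``weakly'' increases $\prem$ yet asserts uniqueness), and reindexing to $2\le j\le K-1$, since $1\le j\le K-2$ does not fit categories $1,\dots,K$.
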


\begin{proof}
(i) On the support of $p$, $a(j)=P(Y>j)+\tfrac12P(Y=j)=t$, so $a(X)$ is constant and $\sigma_1^2=0$. The score $b$ takes the value $0$ at $j-1$ and $1$ at $j+1$, so $b(Y)\sim\mathrm{Bernoulli}(t)$, giving $\theta=t$ and $\sigma_2^2=t(1-t)=s$. The pooled distribution places masses $\tfrac{1-t}{2},\tfrac12,\tfrac{t}{2}$ on the three categories, so $\sum_k m_k^3=\{(1-t)^3+1+t^3\}/8=(2-3s)/8$ and $v(m)=\{1-(2-3s)/8\}/12=(2+s)/32$. Hence $\prem=\sigma_2^2/\{2v(m)\}=16s/(2+s)$.

(ii) The proof is a self-contained two-variable computation, independent of the machinery of Section~\ref{sec:global}; it is given in the Supporting Information, and part~(ii) is in any case subsumed by Theorem~\ref{thm:global}.
\end{proof}

\subsection{A global upper bound}\label{sec:global}

Theorem~\ref{thm:main} shows that the envelope $16\theta(1-\theta)/\{2+\theta(1-\theta)\}$ is attained; the main result of this section is that it is never exceeded, by any pair of ordinal distributions, in any dimension.

\begin{theorem}[Pointwise envelope]\label{thm:global}
Let $K\ge2$, $\lambda=\tfrac12$, and $(p,q)\in\Delta_{K-1}^2$ with $v(m)>0$. Then, with $s=\theta(1-\theta)$,
\begin{equation}\label{eq:envelope}
\prem\;\le\;\frac{16\,s}{2+s}\;\le\;\frac{16}{9}.
\end{equation}
Consequently, for every $K\ge3$: $\sup_{(p,q)}\prem=16/9$, and for each $\theta_0\in[\tfrac12,1)$---the restriction matters, since $\bar\prem$ is symmetric about $\tfrac12$ and decreasing only on this half---the supremum of $\prem$ subject to $\theta\ge\theta_0$ equals $16\theta_0(1-\theta_0)/\{2+\theta_0(1-\theta_0)\}$; both suprema are attained by the family of Theorem~\ref{thm:main}\textup{(i)}.
\end{theorem}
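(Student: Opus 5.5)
The plan is to strengthen the numerator bound of Proposition~\ref{prop:hoeffding} using the tie term it discards, and then show that the resulting upper bound is dominated by the envelope through an inequality that no longer involves the scores $a,b$. Write $h$ as in the proof of Proposition~\ref{prop:hoeffding} and $\tau=P(X=Y)=\sum_k p_kq_k$. Since $h^2=h-\tfrac14\mathbf 1\{X=Y\}$, we have $\Var(h)=\theta(1-\theta)-\tau/4$ exactly (the identity of Lemma~\ref{lem:kernel}). The Hoeffding decomposition then gives $\sigma_1^2+\sigma_2^2\le s-\tau/4$. With $\lambda=\tfrac12$ we have $\prem=(\sigma_1^2+\sigma_2^2)/\{2v(m)\}$, and $32v(m)=\tfrac83(1-\sum_k m_k^3)$. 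So the first inequality in \eqref{eq:envelope} follows once we show
\[
\Psi(p,q)\;:=\;\Bigl(s-\frac{\tau}{4}\Bigr)(2+s)\;-\;\frac{8}{3}\,s\Bigl(1-\sum_k m_k^3\Bigr)\;\le\;0
\]
for all $(p,q)\in\Delta_{K-1}^2$. This is a polynomial inequality in $(p,q)$ alone. It is tight on the family of Theorem~\ref{thm:main}(i): there $\tau=0$ and the projection bound is attained since $\sigma_1^2+\sigma_2^2=s$. That is a useful check that no slack has been lost.

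Next I would reduce the dimension. The symmetries are reversal of the category order (which maps $\theta\mapsto1-\theta$ and leaves $s$, $\tau$ and $m$ invariant) and swapping the arms; $\Psi$ is invariant under both. Merging or splitting empty categories does not change $\Psi$, so $K$ plays no role beyond the supports. For the convexity step I would fix $p$ and a target value $\theta=\theta_0$. Then $s$ is fixed, $\tau$ is linear in $q$, and $\sum_k m_k^3$ is convex in $q$. So $\Psi$ is convex in $q$ on the polytope $\{q\in\Delta:\E\,b(Y)=\theta_0\}$, and its maximum is attained at a vertex, that is, at a $q$ with at most two atoms. Exchanging the roles of the arms, with $q$ now fixed, reduces $p$ to at most two atoms as well. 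The problem is then a finite family of configurations with at most four active categories, distinguished by how the atoms of $p$ and $q$ interleave (coincident, adjacent, or straddling). After using the symmetry group, I expect three essentially different interleaving patterns to remain. In each, normalizing the masses and substituting $\theta_0$ should leave $\Psi$ as a polynomial in two free parameters.

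For each pattern I would then prove $\Psi\le0$ by hand. The plan is to write $-\Psi$ as a sum of manifestly nonnegative terms, that is, products of the parameters, their complements and squares, with the straddling pattern vanishing exactly along the curve of Theorem~\ref{thm:main}(i). This is the step I expect to be the main obstacle. The reduction to vertices is straightforward, but finding nonnegative decompositions (or a monotonicity argument in one variable followed by a one-variable check) for the patterns where the atoms partially coincide, and hence $\tau>0$, may require case splits according to the sign of $\theta-\tfrac12$. If a clean decomposition proves elusive, I would try fixing one parameter, showing $\Psi$ is concave or monotone in the other, and checking the boundary.

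The remaining claims are formal. The second inequality in \eqref{eq:envelope} holds because $16s/(2+s)$ is increasing in $s$ and $s\le\tfrac14$. For $K\ge3$, Theorem~\ref{thm:main}(i) attains $16s/(2+s)$ at every $\theta\in(0,1)$, and at $t=\tfrac12$ it attains $16/9$. For the restricted supremum, the envelope is a function of $s$, which is symmetric about $\tfrac12$ and decreasing on $[\tfrac12,1)$. Hence on $\{\theta\ge\theta_0\}$ the bound is at most its value at $\theta_0$, and the family with $t=\theta_0$ attains it.
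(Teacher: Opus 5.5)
Your reduction is the paper's own: the kernel-variance bound $\sigma_1^2+\sigma_2^2\le s-\tau/4$, the polynomial inequality $\Psi\le0$ (the paper's $\Psi$ is three times yours), convexity of $\Psi$ on a $\theta$-slice in one arm, and the fact that vertices of a simplex cut by one hyperplane have at most two atoms. All of that is sound, and so is your concluding paragraph. The gap is that the proposal stops where the content of the theorem begins. You never prove $\Psi\le0$ on the reduced configurations, and your expectation of what remains is wrong. Two atoms per arm give thirteen order/coincidence patterns: six with four distinct positions, six with one shared position, and one with both shared. Arm exchange plus reversal leaves six orbits, not three.

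The three orbits without shared support have $\tau=0$ and are immediate. In the paper's normalization $\Psi$ is $0$, $-3\alpha\bar\alpha\beta\bar\beta$, and $-3\alpha\bar\alpha\beta\bar\beta(1-\bar\alpha\beta)(1+2\beta-\alpha-\alpha\beta)$. The orbits with a shared category are $P{<}PQ{<}Q$, $PQ{<}P{<}Q$ and $PQ{<}PQ$. There the shared mass enters both $\tau$ and $\sum_k c_k^3$, so $\Psi$ no longer factors into signed pieces. "Write $-\Psi$ as a sum of nonnegative products, perhaps splitting on the sign of $\theta-\tfrac12$" is a hope, not a proof.

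What is needed there is specific.
\begin{itemize}
\item For $PQ{<}PQ$ the paper writes $\Psi=-\tfrac3{16}F$ with $F=7\{\alpha(1-\alpha)+\beta(1-\beta)\}+(\alpha-\beta)^2R$, where $R=(u-1)(7u-8)+2\alpha\beta$ and $u=\alpha+\beta$. It shows $R\ge0$ by splitting on $u$; the delicate range $1<u<\tfrac87$ is handled by $\alpha\beta\ge u-1$.
\item For the two singly-shared orbits, first strip the monomial factors ($\beta\bar\alpha$, respectively $\alpha\beta$). The remaining bidegree-$(3,3)$ cofactor vanishes at exactly one corner of the unit square. Translating that corner to the origin gives $\varphi(x)+\varphi(y)+xyW(x,y)$ with $\varphi(t)=t(7+8t-7t^2)\ge7t$ and $W\ge-14$. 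Hence the cofactor is at least $7(x+y)-14xy=7\{x(1-y)+y(1-x)\}\ge0$.
\end{itemize}
The relevant splits are on $\alpha+\beta$ and on which corner vanishes, not on $\theta-\tfrac12$. Until you supply these certificates, or equivalent ones, for all six orbits, your argument establishes the reduction but not the envelope.
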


The proof reduces an infinite-dimensional optimization to six explicit algebra problems in four steps. \emph{Step 1} bounds the numerator by a quantity depending only on $\theta$ and the tie probability (Lemma~\ref{lem:kernel}), after which no variance component appears. \emph{Step 2} shows the worst case is attained with at most two atoms per arm (Lemma~\ref{lem:reduce}), making the problem finite for every $K$ at once. \emph{Step 3} collapses the thirteen resulting arrangements to six by symmetry (Lemma~\ref{lem:sym}). \emph{Step 4} verifies the six: three are immediate, three reduce to the identities of Appendix~\ref{app:ident}.

\subsubsection*{Step 1: remove the variance components}

\emph{Idea.} Both components are conditional expectations of one bounded kernel, so Hoeffding's decomposition bounds their sum by that kernel's variance. The components $\sigma_1^2,\sigma_2^2$ are the obstruction, being functionals of $p$ and $q$ that resist direct optimization; after this step the problem depends on the pair only through $\theta$, $\tau$ and $\sum_kc_k^3$.

\begin{lemma}[Kernel variance; \cite{bamber1975,brunner2025}]\label{lem:kernel}
Let $h=\mathbf 1\{X<Y\}+\tfrac12\mathbf 1\{X=Y\}$ and $\tau=P(X=Y)=\sum_k p_kq_k$. Then $\Var(h)=\theta-\tfrac{\tau}{4}-\theta^2$, and hence, by the Hoeffding decomposition as in Proposition~\ref{prop:hoeffding},
\[
\sigma_1^2+\sigma_2^2\;\le\;s-\frac{\tau}{4}.
\]
\end{lemma}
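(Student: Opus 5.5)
We compute $\E h^2$ directly from the fact that the kernel $h$ takes only the three values $0,\tfrac12,1$; the variance bound then follows from the Hoeffding decomposition already used in Proposition~\ref{prop:hoeffding}.

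\emph{Second moment of the kernel.} The value $h=1$ occurs exactly on the event $\{X<Y\}$, and the value $h=\tfrac12$ exactly on the event $\{X=Y\}$. Hence
\[
\E h=P(X<Y)+\tfrac12\tau=\theta,
\qquad
\E h^2=P(X<Y)+\tfrac14\tau .
\]
Eliminating $P(X<Y)=\theta-\tfrac12\tau$ gives $\E h^2=\theta-\tfrac14\tau$, where independence of $X$ and $Y$ gives $\tau=\sum_k p_kq_k$. Therefore
\[
\Var(h)=\E h^2-\theta^2=\theta-\tfrac{\tau}{4}-\theta^2 .
\]
This is the whole of the identity; no convexity or telescoping is needed. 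It merely refines the crude bound $h^2\le h$ from Proposition~\ref{prop:hoeffding}: the inequality $h^2\le h$ loses exactly $h-h^2=\tfrac14$ on the tie event, so $\E(h-h^2)=\tau/4$.

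\emph{The bound on the components.} The Hoeffding decomposition, written out in the proof of Proposition~\ref{prop:hoeffding}, gives
\[
\Var(h)=\sigma_1^2+\sigma_2^2+\E\bigl[\{h-a(X)-b(Y)+\theta\}^2\bigr].
\]
The remainder term is nonnegative. Dropping it gives $\sigma_1^2+\sigma_2^2\le \Var(h)=s-\tau/4$, with $s=\theta(1-\theta)$.

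\emph{Check on the decomposition.} The decomposition relies on two orthogonality facts. First, $a(X)-\theta$ and $b(Y)-\theta$ are uncorrelated because $X$ and $Y$ are independent. Second, the residual $h-a(X)-b(Y)+\theta$ has zero conditional mean given $X$ and given $Y$ separately, so it is orthogonal to both projections. Both facts hold for any bounded kernel, so they apply unchanged to the discrete, tied case.

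There is no substantive obstacle here. The only point needing care is consistency of conventions: $a(X)=\E[h\mid X]$ must match the mid-probability definition of $a$ in Section~\ref{sec:setup}, namely $P(Y>x)+\tfrac12P(Y=x)$, and likewise for $b$. This matching is immediate from the definition of $h$.
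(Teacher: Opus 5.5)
Your proposal is correct and follows essentially the same route as the paper: compute $\E h^2=P(X<Y)+\tau/4$ from the three values of $h$, substitute $P(X<Y)=\theta-\tau/2$, and then drop the nonnegative remainder in the Hoeffding decomposition of Proposition~\ref{prop:hoeffding}. Your further observation that $\E(h-h^2)=\tau/4$, and your check of the orthogonality relations behind the decomposition, are both accurate; the paper leaves these points implicit.
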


\begin{proof}
$h^2$ equals $1$ on $\{X<Y\}$, $\tfrac14$ on $\{X=Y\}$ and $0$ otherwise, so $\E h^2=W+\tfrac{\tau}{4}$ with $W=P(X<Y)=\theta-\tfrac{\tau}{2}$; thus $\Var(h)=\theta-\tfrac{\tau}{4}-\theta^2$.
\end{proof}

\subsubsection*{Step 2: reduce infinitely many distributions to finitely many}

Step 1 reduces the problem to a single polynomial inequality, and since everything that follows rests on that polynomial we record its derivation. With $m=(p+q)/2$ and $c=p+q$ we have $\sum_k m_k^3=\tfrac18\sum_kc_k^3$, so
\[
2v(m)\;=\;\frac{1-\tfrac18\sum_kc_k^3}{6}\;=\;\frac{8-\sum_kc_k^3}{48},
\]
and Lemma~\ref{lem:kernel} gives $\prem=(\sigma_1^2+\sigma_2^2)/\{2v(m)\}\le 48(s-\tau/4)/(8-\sum_kc_k^3)$. The envelope $\prem\le16s/(2+s)$ therefore follows from
\[
48\Bigl(s-\frac{\tau}{4}\Bigr)(2+s)\;\le\;16\,s\Bigl(8-\sum_kc_k^3\Bigr),
\]
that is, after dividing by $16$, from $\Psi\le0$ with $\Psi$ as in \eqref{eq:psi} below. Note $8-\sum_kc_k^3>0$ whenever $v(m)>0$, so no sign is reversed in clearing the denominator. Everything after this point is the study of that one object.

\medskip
\noindent\emph{Idea.} Maximizing a convex function over a polytope pushes the optimum to extreme points. Holding $\theta$ fixed makes the admissible set a simplex cut by one hyperplane, and $\Psi$ convex on it, since $s$ is frozen, $\tau$ affine, and only the convex $\sum_k c_k^3$ moves. Such extreme points carry at most two atoms, and the argument never mentions $K$.

\begin{lemma}[Extreme points of a singly-cut simplex]\label{lem:extreme}
Let $u\in\mathbb R^K$ and $\gamma\in\mathbb R$, and let $P=\{x\in\Delta_{K-1}:u^{\!\top}x=\gamma\}$ be nonempty. Every extreme point of $P$ has at most two nonzero coordinates.
\end{lemma}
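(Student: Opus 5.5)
The plan is the standard linear-algebra characterization of extreme points. $P$ is the intersection of the simplex with one hyperplane. Equivalently, $P$ is the set of $x\in\mathbb R^K$ satisfying $K$ inequalities $x_k\ge0$ and two equalities, $\mathbf 1^{\!\top}x=1$ and $u^{\!\top}x=\gamma$. I would argue by contradiction: suppose $x\in P$ has support $S=\{k:x_k>0\}$ with $|S|\ge3$, and exhibit a nonzero direction $d$ with $x\pm\varepsilon d\in P$ for small $\varepsilon>0$. This writes $x$ as the midpoint of two distinct points of $P$, so $x$ is not extreme.

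The direction is constructed as follows. Restrict attention to vectors $d$ supported on $S$. The two constraints $\mathbf 1^{\!\top}d=0$ and $u^{\!\top}d=0$ form a homogeneous linear system of two equations in $|S|\ge3$ unknowns. Such a system has a nonzero solution $d$ with $d_k=0$ for $k\notin S$. For such $d$ we have $\mathbf 1^{\!\top}(x\pm\varepsilon d)=1$ and $u^{\!\top}(x\pm\varepsilon d)=\gamma$ for every $\varepsilon$. Nonnegativity also holds once $\varepsilon<\min_{k\in S,\,d_k\neq0}x_k/|d_k|$, because every coordinate on $S$ is strictly positive and the coordinates off $S$ stay at zero. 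Hence $x=\tfrac12(x+\varepsilon d)+\tfrac12(x-\varepsilon d)$ with both points in $P$ and distinct, contradicting extremality. So every extreme point has $|S|\le2$.

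Nothing here is a real obstacle. The only point needing care is that $d$ must vanish off the support, which is what lets nonnegativity be preserved, and the dimension count guarantees such a $d$ exists. The bound is also stated independently of $K$, as the Step~2 idea requires. For the later use in Lemma~\ref{lem:reduce} I would add that $P$ is a compact polytope, so it is the convex hull of its extreme points. A convex function on $P$ therefore attains its maximum at one of these at-most-two-atom points.
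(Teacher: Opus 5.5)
Your proof is correct and follows essentially the same route as the paper. It finds a nonzero direction supported on the support of $x$ that satisfies the two homogeneous constraints, and then exhibits $x$ as the midpoint of two distinct points of $P$. The only difference is that the paper restricts $d$ to three chosen support indices rather than the whole support, which changes nothing.
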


\begin{proof}
Let $x\in P$ have $|\operatorname{supp}(x)|\ge3$ and pick distinct $i,j,k\in\operatorname{supp}(x)$. Consider directions $d\in\mathbb R^K$ supported on $\{i,j,k\}$ subject to $\sum_\ell d_\ell=0$ and $u^{\!\top}d=0$. These are two linear conditions on a three-dimensional space, so the solution set contains some $d\ne0$. Since $x_i,x_j,x_k>0$, there is $\varepsilon>0$ with $x\pm\varepsilon d\ge0$, and both points satisfy the two equality constraints, hence lie in $P$. As $x=\tfrac12\{(x+\varepsilon d)+(x-\varepsilon d)\}$ with $x+\varepsilon d\ne x-\varepsilon d$, $x$ is not extreme.
\end{proof}

\begin{lemma}[Two-atom reduction]\label{lem:reduce}
Write $c_k=p_k+q_k$ and
\begin{equation}\label{eq:psi}
\Psi(p,q)\;=\;3\Bigl(s-\frac{\tau}{4}\Bigr)(2+s)\;-\;s\,\Bigl(8-\sum_{k}c_k^3\Bigr),
\qquad s=\theta(1-\theta),\quad \tau=\sum_k p_kq_k .
\end{equation}
Then $\sup_{\Delta_{K-1}^2}\Psi$ is attained at a pair in which each arm has at most two support points. The statement and its proof hold for every $K\ge2$, with no constant depending on $K$.
\end{lemma}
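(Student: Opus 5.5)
The plan is to start from a global maximizer of $\Psi$ and push each arm, one at a time, to an extreme point of a singly-cut simplex, using Lemma~\ref{lem:extreme}. The key observation is that when one arm is held fixed, $\theta$ is \emph{linear} in the other arm. With $q$ fixed, $\theta=\E\,a(X)=\sum_x p_x a_q(x)$, where $a_q(x)=P(Y>x)+\tfrac12P(Y=x)$ depends only on $q$. Symmetrically, with $p$ fixed, $\theta=\sum_y q_y b_p(y)$.

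\emph{Existence of a maximizer.} $\Psi$ is a polynomial in $(p,q)$, so it is continuous on the compact set $\Delta_{K-1}^2$. Hence $\sup\Psi$ is attained at some pair $(p^*,q^*)$. Let $\theta^*$ denote its relative effect and $s^*=\theta^*(1-\theta^*)\ge0$.

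\emph{Reducing the first arm.} Fix $q=q^*$ and consider
\[
P=\{p\in\Delta_{K-1}:\ a_{q^*}^{\!\top}p=\theta^*\}.
\]
This set is nonempty, since it contains $p^*$, and it is a compact polytope. On $P$ the quantity $s$ is frozen at $s^*$. Also $\tau=\sum_k p_kq^*_k$ is affine in $p$, and $c_k=p_k+q^*_k$ is affine in $p$. Therefore
\[
\Psi=3\Bigl(s^*-\frac{\tau}{4}\Bigr)(2+s^*)-8s^*+s^*\sum_k c_k^3 .
\]
This is an affine function of $p$ plus $s^*$ times a sum of cubes of nonnegative affine functions. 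Since $s^*\ge0$ and $t\mapsto t^3$ is convex on $[0,\infty)$, $\Psi(\cdot,q^*)$ is convex on $P$.

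A convex function on a compact polytope attains its maximum at a vertex: write any point as a convex combination of vertices and apply Jensen. So there is an extreme point $\tilde p$ of $P$ with $\Psi(\tilde p,q^*)\ge\Psi(p^*,q^*)=\max\Psi$, and equality must hold. By Lemma~\ref{lem:extreme}, $\tilde p$ has at most two atoms.

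\emph{Reducing the second arm.} Repeat the argument with $\tilde p$ fixed. Note that $\theta(\tilde p,q^*)=\theta^*$, because $\tilde p\in P$. The set $Q=\{q\in\Delta_{K-1}:\ b_{\tilde p}^{\!\top}q=\theta^*\}$ contains $q^*$. On $Q$, $s$ is again frozen at $s^*$, and the same convexity argument applies. This yields an extreme point $\tilde q$ with at most two atoms and $\Psi(\tilde p,\tilde q)=\max\Psi$.

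The reduction of $q$ does not disturb $\tilde p$, so the resulting pair has at most two atoms in each arm. Nothing in the argument depends on $K$; Lemma~\ref{lem:extreme} already holds for every $K$.

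\emph{Main obstacle.} The delicate point is the convexity step, for two reasons. First, it is legitimate only because $\theta$ is held fixed. Without that constraint, $s=\theta(1-\theta)$ is a concave function of $p$ multiplying the $\sum c_k^3$ and $\tau$ terms, and convexity fails. This is exactly why the extra hyperplane constraint, and hence Lemma~\ref{lem:extreme} rather than the plain simplex, is needed. Second, the sign $s^*\ge0$ must be checked so that the cubic term enters with the convex orientation.
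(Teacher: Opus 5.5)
Your proposal is correct and follows the paper's proof essentially step for step. You take a maximizer, freeze $\theta=\theta^*$ on the slice so that $\Psi(\cdot,q^*)$ is affine plus $s^*\ge0$ times a convex function, move to an extreme point and apply Lemma~\ref{lem:extreme}, then repeat for the second arm on the same level set. The only difference is cosmetic: you obtain the vertex maximum by writing points of the polytope as convex combinations of its vertices and applying Jensen, whereas the paper cites Bauer's maximum principle.
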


\begin{proof}
$\Psi$ is a polynomial on the compact set $\Delta_{K-1}^2$, so a maximizer $(p^*,q^*)$ exists; write $\theta^*=\theta(p^*,q^*)$, $s^*=\theta^*(1-\theta^*)$, and let $a^*$ be the score vector of $q^*$, so that $\theta(p,q^*)=\sum_k p_k a^*_k$ is a linear functional of $p$.

\emph{Step (a): $\Psi(\cdot,q^*)$ is convex on the slice.} Put
\[
P_1=\{p\in\Delta_{K-1}:\textstyle\sum_k p_k a^*_k=\theta^*\},
\]
a nonempty compact convex set containing $p^*$. For $p\in P_1$ the value of $\theta$, and hence of $s$, is the constant $s^*$. Expanding \eqref{eq:psi} and grouping,
\begin{equation}\label{eq:psisplit}
\Psi(p,q^*)\;=\;\underbrace{3s^*(2+s^*)-8s^*}_{\text{constant on }P_1}\;-\;\underbrace{\tfrac34(2+s^*)\sum_k p_kq^*_k}_{\text{affine in }p}\;+\;\underbrace{s^*\sum_k (p_k+q^*_k)^3}_{s^*\ \ge\ 0\ \text{times a convex function}} ,
\end{equation}
an identity in $p$ verified by expansion. The first term is constant, the second is affine because $\tau=\sum_k p_kq^*_k$ is linear in $p$, and the third is a nonnegative multiple of $p\mapsto\sum_k(p_k+q^*_k)^3$, which is convex on $\Delta_{K-1}$ as a sum of the convex functions $p_k\mapsto(p_k+q^*_k)^3$ on $p_k\ge0$. A constant plus an affine function plus a nonnegative multiple of a convex function is convex, so $\Psi(\cdot,q^*)$ is convex on $P_1$ in the full (not merely coordinatewise) sense.

\emph{Step (b): move to an extreme point.} A convex function on a nonempty compact convex subset of $\mathbb R^K$ attains its maximum at an extreme point (Bauer's maximum principle). Hence there is an extreme point $p'$ of $P_1$ with $\Psi(p',q^*)\ge\Psi(p^*,q^*)=\sup\Psi$, so equality holds and $(p',q^*)$ is also a maximizer. By Lemma~\ref{lem:extreme} applied with $u=a^*$ and $\gamma=\theta^*$, $p'$ has at most two nonzero coordinates. Note $\theta(p',q^*)=\theta^*$, since $p'\in P_1$.

\emph{Step (c): repeat in the second argument.} Now fix $p'$ and let $b'$ be its score vector, so $\theta(p',q)=\sum_j q_jb'_j$ is linear in $q$. Set $P_2=\{q\in\Delta_{K-1}:\sum_j q_jb'_j=\theta^*\}$, which contains $q^*$. The decomposition \eqref{eq:psisplit} is symmetric in the roles of the two arms: with $p'$ fixed and $\theta$ frozen at $\theta^*$ on $P_2$, $\Psi(p',\cdot)$ is again constant plus affine ($\tau=\sum_j p'_jq_j$ is linear in $q$) plus $s^*$ times the convex $q\mapsto\sum_k(p'_k+q_k)^3$. Steps (a) and (b) therefore apply verbatim, yielding an extreme point $q'$ of $P_2$ with $\Psi(p',q')=\sup\Psi$ and $|\operatorname{supp}(q')|\le2$. Since this step does not alter $p'$, the pair $(p',q')$ has at most two support points in each arm and attains the supremum.

No step refers to $K$: Lemma~\ref{lem:extreme} holds in every dimension, the nonlinear term in \eqref{eq:psisplit} is a sum of convex coordinate functions and hence globally convex, and Bauer's principle applies to every nonempty compact convex set. \end{proof}

\begin{remark}
Two points deserve emphasis, since they are where such reductions usually fail. First, the convexity in Step~(a) is genuine convexity on the slice, not merely convexity in each coordinate: it follows from \eqref{eq:psisplit}, in which the only non-affine term carries the \emph{nonnegative} coefficient $s^*$. Had $s$ varied on the slice, or entered with a sign, the argument would break; freezing $\theta$ is what makes $s$ constant, and this is the reason the reduction is performed on $\theta$-slices rather than on $\Delta_{K-1}^2$ directly. Second, Step~(c) is legitimate only because fixing $\theta=\theta^*$ in the second stage preserves the value obtained in the first; both stages optimize on the same level set of $\theta$, so the two reductions do not undo one another.
\end{remark}

\subsubsection*{Step 3: identify geometrically equivalent configurations}

\emph{Idea.} Relabelling the arms, or reading the scale in reverse, replaces $\theta$ by $1-\theta$ and leaves $s$, $\tau$ and $p+q$ untouched. Many of the thirteen arrangements are therefore one configuration seen twice.

\begin{lemma}[Symmetry]\label{lem:sym}
$\Psi$ is invariant under exchanging the two arms and under reversing the category order. Consequently the thirteen order/coincidence patterns of two atoms per arm fall into six orbits, and it suffices to verify $\Psi\le0$ on one representative of each.
\end{lemma}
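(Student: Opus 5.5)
The plan is to prove the two invariances directly from the definition \eqref{eq:psi}, and then to do the orbit count by explicit enumeration. First I would observe that $\Psi$ depends on $(p,q)$ only through three quantities: $s=\theta(1-\theta)$, $\tau=\sum_k p_kq_k$, and the multiset $\{c_k\}$ with $c_k=p_k+q_k$, the last entering through $\sum_k c_k^3$. It therefore suffices to show that each of the two operations preserves all three.

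\emph{Arm exchange.} The map $(p,q)\mapsto(q,p)$ sends $\theta=P(X<Y)+\tfrac12P(X=Y)$ to $P(Y<X)+\tfrac12P(X=Y)=1-\theta$, so $s$ is unchanged. Both $\tau$ and $c$ are symmetric in $p$ and $q$, so they are unchanged too.

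\emph{Order reversal.} Under $k\mapsto K+1-k$, applied to both arms, strict inequalities flip while ties are preserved. Again $\theta\mapsto1-\theta$ and $s$ is fixed. The quantity $\tau$ and the power sum $\sum_k c_k^3$ are invariant under any common permutation of the coordinates, so they are unchanged as well. Both maps are involutions and they commute, so they generate a Klein four-group $G$ acting on $\Delta_{K-1}^2$, and $\Psi$ is $G$-invariant.

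For the combinatorial part, I take Lemma~\ref{lem:reduce} as given: a maximizer has $p$ supported on $\{x_1<x_2\}$ and $q$ on $\{y_1<y_2\}$. Arms with a single atom are included by allowing a zero weight, so they need no separate case. The relevant data is the weak order of the four positions, in which coincidences can occur only between an $x$ and a $y$. I encode each pattern as a word in $x$, $y$, and $z$, where $z$ marks a shared category, and count as follows.
\begin{itemize}
\item With no coincidence, there are $\binom42=6$ interleavings: $xxyy$, $yyxx$, $xyxy$, $yxyx$, $xyyx$, $yxxy$.
\item With one coincidence, the shared point is one of $x_1=y_1$, $x2=y_2$, $x_1=y_2$, $x_2=y_1$. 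This gives the six words $zxy$, $zyx$, $xyz$, $yxz$, $yzx$, $xzy$.
\item With two coincidences, only $x_1=y_1,\ x_2=y_2$ is consistent, giving the word $zz$.
\end{itemize}
This totals $6+6+1=13$.

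Arm exchange acts on a word by swapping the letters $x\leftrightarrow y$, and reversal acts by reading the word backwards. Applying these operations gives the orbits
\[
\{xxyy,yyxx\},\ \{xyxy,yxyx\},\ \{xyyx,yxxy\},\ \{zxy,zyx,xyz,yxz\},\ \{yzx,xzy\},\ \{zz\},
\]
which are six orbits.

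Finally, $\Psi$ is constant along $G$-orbits of pairs, and $G$ maps each pattern class bijectively onto the other classes in its orbit. Hence $\sup\Psi$ over a class equals $\sup\Psi$ over any other class in the same orbit, and verifying $\Psi\le0$ on one representative per orbit suffices.

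The only step I expect to need care is this enumeration: I must check that no pattern is missed or double-counted and that the group action on words is computed correctly. In particular, reversal also swaps the roles $x_1\leftrightarrow x_2$ and $y_1\leftrightarrow y_2$, which is harmless because the words record only letters and not indices. The invariance itself is immediate.
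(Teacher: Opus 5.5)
Your proposal is correct and follows the paper's proof. Both show the invariance of $\Psi$ through $s$, $\tau$ and $\sum_k c_k^3$, both count $6+6+1=13$ patterns, and both obtain the same six orbits: your $\{yzx,xzy\}$ and $\{zxy,zyx,xyz,yxz\}$ are the paper's orbits 4 and 5. Your remark that each group element maps one pattern class bijectively onto another spells out the ``it suffices'' step, which the paper leaves implicit.
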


\begin{proof}
Exchanging $p$ and $q$ maps $\theta\mapsto1-\theta$, hence fixes $s=\theta(1-\theta)$, and fixes $\tau$ and $c=p+q$; reversing the category order likewise maps $\theta\mapsto1-\theta$ and permutes the coordinates of $c$. Both therefore fix every term of \eqref{eq:psi}. The enumeration of patterns is complete and is counted as follows. Two atoms per arm occupy between two and four distinct categories. With four distinct positions, the pattern is determined by which two of the four ordered slots carry the $p$-atoms, giving $\binom42=6$; with exactly one shared position there are three ordered slots, and the pattern is fixed by which slot is shared ($3$ choices) and which of the remaining two carries the $p$-atom ($2$ choices), giving $6$; with both positions shared there is $1$. In total $6+6+1=13$, listed in Appendix~\ref{app:cert}. The induced action of the group on these thirteen has orbits
$\{PPQQ,QQPP\}$, $\{PQQP,QPPQ\}$, $\{PQPQ,QPQP\}$, $\{P\!\cdot\!PQ\!\cdot\!Q,\,Q\!\cdot\!PQ\!\cdot\!P\}$, $\{PQ\!\cdot\!P\!\cdot\!Q,PQ\!\cdot\!Q\!\cdot\!P,P\!\cdot\!Q\!\cdot\!PQ,Q\!\cdot\!P\!\cdot\!PQ\}$ and $\{PQ\!\cdot\!PQ\}$.
\end{proof}

\subsubsection*{Step 4: verify the remaining six cases}

\begin{proof}[Proof of Theorem~\ref{thm:global}]
Since $2v(m)=\bigl(8-\sum_k c_k^3\bigr)/48$, Lemma~\ref{lem:kernel} gives $\prem\le 48\,(s-\tfrac{\tau}{4})/(8-\sum_k c_k^3)$, so \eqref{eq:envelope} follows if $\Psi\le0$ on $\Delta_{K-1}^2$. By Lemma~\ref{lem:reduce} it suffices to treat pairs with at most two atoms per arm, and such a pair enters $\Psi$ only through the masses $(\alpha,\bar\alpha)$, $(\beta,\bar\beta)$ and the order pattern of the at most four support points; the verification is therefore uniform in $K$. By Lemma~\ref{lem:sym} six representatives suffice.

Each representative polynomial below is simply what $\Psi$ becomes once the atom positions are fixed, written in the masses $\alpha$ and $\beta$.

\emph{Orbits 1--3 (no shared support, $\tau=0$).} For $P{<}P{<}Q{<}Q$, $\theta\equiv1$, so $s=\tau=0$ and $\Psi\equiv0$. For $P{<}Q{<}Q{<}P$, $\Psi=-3\alpha\bar\alpha\beta\bar\beta\le0$. For $P{<}Q{<}P{<}Q$,
\[
\Psi=-3\,\alpha\bar\alpha\beta\bar\beta\,(1-\bar\alpha\beta)\,(1+2\beta-\alpha-\alpha\beta)\;\le\;0,
\]
since $1-\bar\alpha\beta\ge0$ and $1+2\beta-\alpha(1+\beta)\ge1+2\beta-(1+\beta)=\beta\ge0$.

\emph{Orbits 4--6 (a shared position, or both shared).} These are the hard cases, for a statistical rather than algebraic reason. When the arms share no category, $\tau=0$, $\Psi$ loses its tie term, and the survivor factors into separately signed pieces; this is why orbits 1--3 fall immediately. A shared category enters $\sum_kc_k^3$ with the \emph{sum} of the two arms' masses, cubed, so the tie term and the pooled-variance term move together and neither can be bounded alone. That coupling destroys the factorization, and it is exactly the regime in which the premium is largest, since shared mass is what concentrates the pooled distribution.

Each of the three reduces to showing that an explicit polynomial of bidegree at most $(4,4)$ is nonnegative on the unit square, and each is settled by one algebraic identity that rewrites it as a combination of manifestly signed terms: for the doubly-shared orbit by isolating a perfect square in $\alpha-\beta$, and for the two singly-shared orbits by translating the cofactor's unique zero to the origin, after which both take the same form and succumb to the same three elementary bounds. The identities, the polynomials, and the verification are given in Appendix~\ref{app:ident}.

Finally, $x\mapsto16x/(2+x)$ is increasing and $s\le\tfrac14$, giving the constant $16/9$; the envelope decreases in $\theta$ on $[\tfrac12,1)$, so the $\theta$-constrained supremum is $16\theta_0(1-\theta_0)/\{2+\theta_0(1-\theta_0)\}$, attained by Theorem~\ref{thm:main}(i), and $\theta_0=\tfrac12$ gives the global statement.
\end{proof}

\begin{remark}
The proof uses no floating-point computation and no computer algebra beyond expanding the two identities \eqref{eq:Fdecomp} and \eqref{eq:Sform}, which a reader can verify directly.
\end{remark}

\begin{remark}[Scope of Theorem~\ref{thm:global}]\label{rem:scope}
The statement is confined to ordinal outcomes with finitely many categories and to balanced allocation. Unequal allocation is not covered, and the argument does not extend: \eqref{eq:lam} shows $\prem(\lambda)$ diverges as $\lambda\to1$ along the extremal family, and the reduction of Lemma~\ref{lem:reduce} uses a symmetry of $\Psi$ that $\lambda\ne\tfrac12$ destroys. The bound is on an asymptotic variance ratio (Section~\ref{sec:finite}) and covers only the variance substitution (Section~\ref{sec:notcover}).
\end{remark}

\begin{corollary}[Effect of the tie correction]\label{cor:uncorrected}
Let $\lambda=\tfrac12$ and let $\prem_0=V_{\mathrm{true}}(\tfrac12)\big/\{(1/12)/\lambda(1-\lambda)\}$ denote the premium incurred when planning uses the \emph{uncorrected} null variance $1/12$ in place of $v(m)$. Then
\[
\prem_0\;=\;6(\sigma_1^2+\sigma_2^2)\;\le\;6\Bigl(s-\frac{\tau}{4}\Bigr)\;\le\;6s\;\le\;\frac32 ,
\]
and $3/2$ is attained, by the family of Theorem~\textup{\ref{thm:main}(i)} at $\theta=\tfrac12$. Hence
\[
\sup\prem_0=\frac32<\frac{16}{9}=\sup\prem ,
\]
the two suprema differing by the factor $32/27$.
\end{corollary}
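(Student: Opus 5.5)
The plan is to substitute the definitions directly. With $\lambda=\tfrac12$ and $\lambda(1-\lambda)=\tfrac14$, the uncorrected planning variance is $(1/12)/(1/4)=\tfrac13$. By \eqref{eq:vtrue}, $V_{\mathrm{true}}(\tfrac12)=2(\sigma_1^2+\sigma_2^2)$. Dividing the two gives $\prem_0=6(\sigma_1^2+\sigma_2^2)$, which is the first equality.

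The chain of inequalities then comes from earlier results. Lemma~\ref{lem:kernel} gives $\sigma_1^2+\sigma_2^2\le s-\tau/4$, which is the first inequality. Since $\tau=\sum_k p_kq_k\ge0$, we get $s-\tau/4\le s$. Proposition~\ref{prop:hoeffding}, or simply $\theta(1-\theta)\le\tfrac14$, gives $6s\le\tfrac32$. No new estimate is needed, because the denominator here is a constant and so the coupling with the pooled distribution, which was the difficulty in Theorem~\ref{thm:global}, does not arise.

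For attainment I would use Theorem~\ref{thm:main}(i) with $t=\tfrac12$, which requires $K\ge3$ (for $K=2$ the supremum statement refers to the same family whenever it exists). That family has $\sigma_1^2=0$ and $\sigma_2^2=t(1-t)=\tfrac14$, so $\prem_0=6\cdot\tfrac14=\tfrac32$. As a consistency check, $\tau=0$ because the supports of $p$ and $q$ are disjoint, so every inequality in the chain is tight there. Combining this with $\sup\prem=16/9$ from Theorem~\ref{thm:global} gives the comparison $\tfrac{16}{9}\big/\tfrac32=\tfrac{32}{27}$.

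I expect no step to be a real obstacle; the only care needed is bookkeeping of the allocation factors, so that $V_{\mathrm{true}}$ and the planning variance are both scaled by the same $1/\{\lambda(1-\lambda)\}$ convention as in \eqref{eq:prem}.
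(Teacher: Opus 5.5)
Your proposal is correct and is essentially the paper's own argument: the chain comes from Lemma~\ref{lem:kernel}, $\tau\ge0$ and $s\le\tfrac14$, and attainment comes from the family of Theorem~\ref{thm:main}(i) at $t=\tfrac12$, where $\tau=0$, $\sigma_1^2=0$, $\sigma_2^2=s=\tfrac14$ and the Hoeffding bound is tight. Your parenthetical on $K=2$ is muddled and can be made precise: the family does not exist for $K=2$, and there $\prem_0=\tfrac32(p_1p_2+q_1q_2)\le\tfrac34$, so the supremum comparison holds only for $K\ge3$, like $\sup\prem=16/9$ in Theorem~\ref{thm:global}.
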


\begin{proof}
Immediate from Lemma~\ref{lem:kernel} and $s\le\tfrac14$; the family has $\tau=0$, $\sigma_1^2=0$, $\sigma_2^2=s$, and Hoeffding is tight there.
\end{proof}

\begin{remark}
The comparison is worth stating because it runs against intuition. The tie correction is an improvement: it replaces $1/12$ by the smaller, exactly correct null variance $v(m)$, and so removes the automatic padding that ties would otherwise supply. But that padding was also insurance. Because the correction lowers the planned sample size, it raises the worst-case ratio of true to planned variance, from $3/2$ to $16/9$. A planner using the uncorrected formula is more conservative on average---which is the behaviour reported for Method~A of \cite{poehlmann2024}---and is also better protected in the worst case, at the price of over-sizing in the typical case. Neither convention is uniformly preferable, but the choice should be made knowingly, and the two ceilings quantify it.
\end{remark}

\begin{proposition}[Strictness in the nontrivial interior]\label{prop:strict}
For each of the orbit representatives $2$--$6$, $\Psi<0$ at every $(\alpha,\beta)\in(0,1)^2$. Orbit~$1$ has $\theta\in\{0,1\}$ and is excluded from the nontrivial equality problem. Consequently, among two-atom configurations with $\theta\in(0,1)$, equality $\Psi=0$ can occur only when $\alpha\in\{0,1\}$ or $\beta\in\{0,1\}$, that is, only when one arm is degenerate.
\end{proposition}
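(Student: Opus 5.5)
The plan is to go orbit by orbit through the representatives of Lemma~\ref{lem:sym}, using the explicit forms of $\Psi$ from the proof of Theorem~\ref{thm:global} and the identities of Appendix~\ref{app:ident}. In each case the aim is to show that every term which could vanish in the open square $(0,1)^2$ is in fact strictly signed there.

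For orbit~2 ($P{<}Q{<}Q{<}P$) we have $\Psi=-3\alpha\bar\alpha\beta\bar\beta$. This is strictly negative on $(0,1)^2$, since each factor is positive there.

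For orbit~3 ($P{<}Q{<}P{<}Q$) the factorization is
\[
\Psi=-3\alpha\bar\alpha\beta\bar\beta(1-\bar\alpha\beta)(1+2\beta-\alpha-\alpha\beta).
\]
Here $\bar\alpha\beta<1$ on the open square, so $1-\bar\alpha\beta>0$. For the last factor, $\alpha<1$ gives $1+2\beta-\alpha(1+\beta)>1+2\beta-(1+\beta)=\beta>0$. Every factor is therefore strictly signed and $\Psi<0$.

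Orbits 4--6 carry the real work. For each I will take the identity from Appendix~\ref{app:ident} that writes $-\Psi$ as a combination of manifestly nonnegative terms. The step is to exhibit at least one term that is strictly positive on $(0,1)^2$ and carries a strictly positive coefficient.

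\begin{itemize}
\item \emph{Doubly-shared orbit ($PQ\cdot PQ$).} The identity isolates a perfect square in $\alpha-\beta$, plus remainder terms. I expect those remainder terms to carry a factor like $\alpha\bar\alpha\beta\bar\beta$ with a positive cofactor. If so, strictness holds even on the diagonal $\alpha=\beta$, where the square vanishes. I would check this directly by computing $\Psi$ at $\alpha=\beta$ and confirming it is strictly negative for $\alpha\in(0,1)$.
\item \emph{Singly-shared orbits.} After translating the cofactor's unique zero to the origin, both orbits take a common form bounded by three elementary inequalities. I will check which of the three is strict on the open square, for instance an inequality of the form $x(1-x)>0$, or a strict AM--GM step away from the boundary.
\end{itemize}

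The main obstacle is locating the zero set of the translated cofactor. Its unique zero must be shown to lie on the boundary of the square, or outside it, and not in the interior. If instead it lay inside, I would show that the prefactor multiplying the cofactor is strictly negative there, or that the other signed terms are strictly positive at that point.

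The consequence then follows directly. Any two-atom configuration with $\theta\in(0,1)$ belongs to one of orbits 2--6, because orbit~1 forces $\theta\in\{0,1\}$ and the symmetries preserve $\Psi$. So $\Psi=0$ forces $(\alpha,\beta)\notin(0,1)^2$, which means one arm is a point mass.
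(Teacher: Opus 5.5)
Your proposal is correct and follows the paper's own proof. Orbits~2--3 go through their manifest factorizations. Orbit~6 uses $F=7\{\alpha(1-\alpha)+\beta(1-\beta)\}+(\alpha-\beta)^2R$ with $R\ge0$, whose first term is strictly positive on $(0,1)^2$; note it is a sum, not the product $\alpha\bar\alpha\beta\bar\beta$ you anticipated. Orbits~4--5 use the chain (a)--(c), in which step (b) is the strict one and gives $S\ge7\{x(1-y)+y(1-x)\}>0$, with the prefactors $\beta\bar\alpha$ and $\alpha\beta$ positive in the interior.

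That chain already gives strict positivity on the whole open square, so your separate ``locate the zero'' step is unnecessary. Its fallback would not have worked anyway: an interior zero of the cofactor forces $\Psi=0$ whatever the sign of the prefactor.
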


\begin{proof}
Orbit~2 has $\Psi=-3\alpha\bar\alpha\beta\bar\beta<0$ on $(0,1)^2$. Orbit~3 has
$\Psi=-3\alpha\bar\alpha\beta\bar\beta(1-\bar\alpha\beta)(1+2\beta-\alpha-\alpha\beta)$, in which $1-\bar\alpha\beta\ge1-\beta>0$ and $1+2\beta-\alpha(1+\beta)\ge\beta>0$ on the interior, so $\Psi<0$ there.

For orbit~6, the decomposition \eqref{eq:Fdecomp} gives $F=7\{\alpha(1-\alpha)+\beta(1-\beta)\}+(\alpha-\beta)^2R$ with $R\ge0$; on $(0,1)^2$ both $\alpha(1-\alpha)$ and $\beta(1-\beta)$ are strictly positive, so $F>0$ and $\Psi=-\tfrac3{16}F<0$. For orbits~4 and~5, the chain established in the proof of Theorem~\ref{thm:global} yields, in the shifted coordinates of \eqref{eq:Sform},
\[
S(x,y)\;\ge\;7(x+y)-14xy\;=\;7\{x(1-y)+y(1-x)\},
\]
which is strictly positive whenever $x,y\in(0,1)$; since the substitutions $(\alpha,\beta)=(x,1-y)$ and $(1-x,1-y)$ map $(0,1)^2$ onto itself, the corresponding $\Psi$ is strictly negative on the open square. The monomial prefactors removed before those substitutions vanish only on the boundary.
\end{proof}

\subsection{Configurations attaining the bound}\label{sec:equality}

For a design safeguard, which configurations attain the worst case matters as much as its size.

\begin{theorem}[Equality]\label{thm:equality}
Let $\lambda=\tfrac12$ and $\theta\notin\{0,1\}$.
\begin{enumerate}
\item[(i)] \textup{(Sufficiency, general.)} If $p=e_j$ and $q$ is supported on two categories $i<j<k$ strictly straddling $j$, then equality holds in \eqref{eq:envelope}; $\theta$ is the mass $q$ places above $j$, and adjacency of $i,j,k$ is not required. The same holds with the arms exchanged.
\item[(ii)] \textup{(Necessity within the two-atom class.)} Among pairs in which each arm has at most two atoms, equality holds \emph{only} in the configurations of \textup{(i)}. In particular equality at $\prem=16/9$ requires $\theta=\tfrac12$, i.e.\ $q$ splitting its mass equally on the two sides.
\item[(iii)] \textup{(A necessary condition in general.)} If equality holds, then neither arm can carry two atoms inside a block of categories on which its own score function is constant and the other arm places no mass.
\end{enumerate}
\end{theorem}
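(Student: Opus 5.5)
The plan is to handle the three parts separately. Part~(i) is a direct recomputation. Part~(ii) combines the chain of inequalities behind Theorem~\ref{thm:global} with Proposition~\ref{prop:strict}. Part~(iii) is a perturbation argument that uses Theorem~\ref{thm:global} itself as the contradiction.

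For (i) I would repeat the computation in the proof of Theorem~\ref{thm:main}(i), now with $q=(1-t)e_i+t\,e_k$ and $i<j<k$ not necessarily adjacent. The scores depend only on the order of categories, not on their spacing. Hence $a(j)=t$ is constant, so $\sigma_1^2=0$. The score of $q$ satisfies $b(i)=0$ and $b(k)=1$, so $b(Y)\sim\mathrm{Bernoulli}(t)$, which gives $\theta=t$ and $\sigma_2^2=s$. The pooled distribution again has masses $\{(1-t)/2,\tfrac12,t/2\}$, because empty categories contribute nothing to $\sum_k m_k^3$. Therefore $v(m)=(2+s)/32$ and $\prem=16s/(2+s)$. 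For the arm-exchanged version I would note that exchanging the arms maps $\theta\mapsto1-\theta$ and fixes $s$, $m$ and $\sigma_1^2+\sigma_2^2$, so $\prem$ is unchanged.

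For (ii), recall that the envelope is obtained as $\prem\le 48(s-\tau/4)/(8-\sum_k c_k^3)\le16s/(2+s)$. Equality in \eqref{eq:envelope} with $s>0$ therefore forces $\Psi=0$. By Lemma~\ref{lem:sym} and Proposition~\ref{prop:strict}, a two-atom pair with $\theta\in(0,1)$ and $\Psi=0$ must have a degenerate arm. By arm exchange I may take $p=e_j$. The remaining step is to enumerate $q$ with at most two atoms:
\begin{itemize}
\item If $q$ is a point mass, then either $\theta\in\{0,1\}$ (excluded) or $q=e_j$ (excluded because $v(m)=0$).
\item If $q$ has both atoms strictly on one side of $j$, then $\theta\in\{0,1\}$, which is excluded.
\item If $q$ has atoms straddling $j$, this is exactly configuration (i).
\item If $q=\beta e_j+(1-\beta)e_k$ with $k\ne j$ and $\beta\in(0,1)$, this case must be excluded directly.
\end{itemize}
In the last case, take $k>j$ by reversal. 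Then $a$ is constant on $\operatorname{supp}p$, $b(j)=\tfrac12$ and $b(k)=1$, so $\sigma_2^2=\beta(1-\beta)/4$ and $\theta=1-\beta/2$. The pooled distribution is $m=((1+\beta)/2,(1-\beta)/2)$. I would check the strict inequality $\prem<16s/(2+s)$ as a one-variable polynomial inequality on $(0,1)$. This inequality is also the $\alpha\in\{0,1\}$ edge of orbits 4--5, so the bounds of Appendix~\ref{app:ident} may already deliver it. The claim about $16/9$ then follows because $16s/(2+s)=16/9$ forces $s=\tfrac14$, that is, $\theta=\tfrac12$.

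For (iii) I would argue by contradiction via a merging perturbation. Suppose equality holds, so $\prem=16s/(2+s)>0$, and suppose $p$ has atoms at $i<i'$ inside a block $B$ of categories on which $a$ is constant and $q(B)=0$. Move all of $p$'s mass in $B$ to category $i$. Each quantity is unchanged or changes as follows:
\begin{itemize}
\item $\theta=\E\,a(X)$ and $\sigma_1^2$ are unchanged, because $a$ is constant on $B$.
\item For every $y\notin B$, the value $P(X<y)+\tfrac12P(X=y)$ is unchanged, since $B$ is a contiguous block not containing $y$. So $b$ on $\operatorname{supp}q$ and $\sigma_2^2$ are unchanged.
\item $\tau$ is unchanged, because $q(B)=0$.
\item $\sum_k c_k^3$ strictly increases, since $(x+y)^3>x^3+y^3$ for $x,y>0$.
\end{itemize}
Thus $v(m)$ strictly decreases while the numerator is fixed, so $\prem$ strictly increases above $16s/(2+s)$ with $s$ fixed. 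This contradicts Theorem~\ref{thm:global}. The same argument applies to $q$.

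I expect the main obstacle to be the shared-atom case in (ii). Proposition~\ref{prop:strict} only covers the open square, so strictness on that boundary edge has to be verified separately and with care.
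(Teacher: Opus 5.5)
Your proposal is correct and follows the paper's proof in all three parts: direct recomputation for (i), Proposition~\ref{prop:strict} to force a degenerate arm followed by a boundary audit for (ii), and merging the atoms of a $q$-free block using strict convexity of $x\mapsto x^3$ for (iii). The differences are minor. In (ii) you audit the boundary by enumerating the two-atom comparators and checking the shared-atom edge directly; there $\prem=2\beta/(1+\beta)$, and strictness reduces to $8+6\beta-7\beta^2>0$ on $[0,1]$. The paper instead uses the factorization $\Psi=-q_0G(u,w)$ with $G>0$. In (iii) you track $\prem$ itself, whereas the paper tracks $\Psi$.
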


\begin{proof}
(i) is Theorem~\ref{thm:main}(i), whose computation never used adjacency: if $q$ places mass $1-t$ at $i$ and $t$ at $k$ with $i<j<k$, then $a\equiv t$ on $\{j\}$, $b(Y)\sim\mathrm{Bernoulli}(t)$, $\tau=0$, and $\sum_k m_k^3=\{(1-t)^3+1+t^3\}/8$ exactly as before.

(ii) Equality forces $\Psi=0$. By Proposition~\ref{prop:strict}, within the two-atom class this is possible only on the boundary of the square, i.e.\ only if one arm is degenerate. It remains to audit what a boundary point means, since a degenerate arm may collapse the nominal pattern into a different one. Suppose then $p=e_j$, and let $q$ place masses $u$, $q_0$ and $w$ on the categories below, at, and above $j$ (Lemma~\ref{lem:reduce} leaves at most two atoms in $q$, and a power-sum argument, given in the Supporting Information, allows the blocks below and above to be treated as single categories). Then $\theta=w+q_0/2$, $\tau=q_0$, and a direct expansion gives
\begin{equation}\label{eq:bdry}
\Psi\;=\;-\,q_0\,G(u,w),
\qquad
\tfrac{16}{3}\,G(u,w)\;=\;\varphi(u)+\varphi(w)+uw\bigl\{4(u-w)^2+7(u+w)-20\bigr\},
\end{equation}
with $\varphi$ as in \eqref{eq:Sform}. Since $\varphi(t)\ge7t$, and $uw\le(u+w)^2/4$ with $u+w\le1$ gives $20uw\le5(u+w)^2\le5(u+w)$, we get $\tfrac{16}{3}G\ge7(u+w)-20uw\ge2(u+w)$, so $G>0$ unless $u=w=0$. Hence $\Psi=0$ forces $q_0=0$ or $u=w=0$. The second case is $q=e_j=p$, the degenerate null, which has $v(m)=0$ and is excluded by Remark~\ref{rem:degen}. So $q_0=0$: the comparator places no mass at the degenerate category, and $\theta=w\in(0,1)$ then forces $w>0$ and $u=1-w>0$, i.e.\ mass strictly on both sides of $j$. That is exactly the configuration of part~(i). A degenerate arm whose comparator lies wholly to one side \emph{and carries no coincident mass} ($q_0=0$) gives $w\in\{0,1\}$ and hence $\theta\in\{0,1\}$, which is excluded; a comparator supported at $j$ and on one side only has $q_0>0$ and is already covered by the strictness of \eqref{eq:bdry}.

(iii) Suppose $p$ carries mass at two categories $i\ne i'$ with $a_i=a_{i'}$ and $q_i=q_{i'}=0$. Moving all of that mass onto one of them leaves $\theta$, $s$ and $\tau$ unchanged and strictly increases $\sum_k c_k^3$ by strict Schur-convexity of $x\mapsto x^3$, hence strictly increases $\Psi=3(s-\tfrac{\tau}{4})(2+s)-s(8-\sum_kc_k^3)$ whenever $s>0$. Since $\Psi\le0$ everywhere by Theorem~\ref{thm:global}, a configuration with $\Psi=0$ admits no such strict increase, so no such pair of atoms exists. The argument for $q$ is identical.
\end{proof}

\begin{remark}
Part~(iii) and the numerical evidence are enough for the design use we make of the result: a planner who anticipates dispersed arms is far from equality, and the Supporting Information quantifies the distance. Leaking a fraction $\varepsilon$ of the degenerate arm's mass elsewhere costs roughly $2.5\varepsilon$ in premium.
\end{remark}

\subsection{Allocation}\label{sec:alloc}

Two allocation facts complete the theoretical picture. First, for known components the variance-optimal split of \eqref{eq:vtrue} is the Neyman allocation $\lambda^\star=\sigma_1/(\sigma_1+\sigma_2)$ \cite{happ2019}, and the total-sample-size loss from using balanced allocation instead is exactly
\begin{equation}\label{eq:Lr}
L(r)\;=\;\frac{V_{\mathrm{true}}(\tfrac12)}{V_{\mathrm{true}}(\lambda^\star)}\;=\;\frac{2(1+r^2)}{(1+r)^2},
\qquad r=\frac{\sigma_1}{\sigma_2},
\end{equation}
which is flat near its minimum: $L(1.5)=1.04$, $L(2)=1.11$, $L(3)=1.25$, and $L\le2$ always. Moreover, misallocation enters power only as a multiplicative variance inflation, so any procedure that can revise total $N$ (an internal pilot) recovers the power exactly at the cost of $L(r)-1$ extra patients; the \emph{irreversibility} of the randomization ratio therefore has bounded consequences.

Second, the premium itself is not allocation-invariant. Along the extremal family, with $s=\theta(1-\theta)$,
\begin{equation}\label{eq:lam}
\prem(\lambda)\;=\;\frac{12\,\lambda\,s}{\,1-\lambda^3-(1-\lambda)^3(1-3s)\,},
\end{equation}
which at $s=\theta(1-\theta)$, $\theta=0.55$ evaluates to $0.85$, $1.12$, $1.76$, $2.85$, $3.88$ at $\lambda=\tfrac14,\tfrac13,\tfrac12,\tfrac23,\tfrac34$, and diverges as $\lambda\to1$ (the denominator behaves as $3(1-\lambda)$). Worst-case variance understatement and forced unequal allocation toward the low-information arm therefore \emph{compound}: the $16/9$ cap is a balanced-allocation statement. In the empirical panel this interaction never activates because $r$ stays near $1$ (Section~\ref{sec:empirical}), but designs with externally imposed skewed ratios should not rely on the balanced-allocation bound.

\begin{table}[htbp]
\centering
\caption{The envelope $\bar\prem(\theta)=16\theta(1-\theta)/\{2+\theta(1-\theta)\}$ of Theorem~\ref{thm:global}: the ceiling at each anticipated effect, its two conventional readings, and the resulting worst-case sample size $\bar\prem(\theta)N_{\mathrm{plan}}$. The constant $16/9$ applies only as $\theta\to\tfrac12$. The $\theta=0.50$ column illustrates the limiting variance-ratio ceiling only: the usual effect-based formula is singular at $\theta=\tfrac12$, so $N_{\mathrm{plan}}$ there is a hypothetical baseline rather than a calculated design. Sample sizes are rounded up to the nearest even total.}
\label{tab:envelope}
\small
\begin{tabular}{lcccccc}
\toprule
anticipated $\theta$ & $0.50$ & $0.55$ & $0.60$ & $0.65$ & $0.70$ & $0.75$\\
\midrule
ceiling $\bar\prem(\theta)$ & $16/9$ & $1.762$ & $12/7$ & $1.634$ & $1.520$ & $1.371$\\
excess over calculated $N$ & $+77.8\%$ & $+76.2\%$ & $+71.4\%$ & $+63.4\%$ & $+52.0\%$ & $+37.1\%$\\
variance understated by & $43.8\%$ & $43.2\%$ & $41.7\%$ & $38.8\%$ & $34.2\%$ & $27.1\%$\\
\midrule
$N_{\mathrm{plan}}=200$ & $356$ & $354$ & $344$ & $328$ & $306$ & $276$\\
$400$ & $712$ & $706$ & $686$ & $654$ & $610$ & $550$\\
$800$ & $1424$ & $1410$ & $1372$ & $1308$ & $1218$ & $1098$\\
\bottomrule
\end{tabular}
\end{table}

\section{Realized premia under clinical ordinal alternatives}\label{sec:empirical}

\subsection{Design and results}

The bound holds for all ordinal distributions under balanced allocation; the premium's actual size for a real trial population is an empirical matter.

\subsection*{Real trial data}

Table~\ref{tab:realdata} reports the premium for every pair of arms we were able to extract from published trials with an ordinal endpoint, across three disease areas and sample sizes spanning $41$ to $3035$. No parametric outcome model enters these rows: each premium is the empirical plug-in value from an observed arm-level table for a completed randomized comparison.

\begin{table}[h]
\centering
\caption{Realized planning premium for extracted arm-level tables from published randomized trials. $\bar\prem(\theta)$ is the ceiling of Theorem~\ref{thm:global} at the observed effect. $L(r)$ is the balanced-allocation loss \eqref{eq:Lr}; $D_{\mathrm{rel}}$ is the relative pooled-tie distortion \eqref{eq:pool}.}
\label{tab:realdata}
\small
\begin{tabular}{@{}llrrrrrrr@{}}
\toprule
trial & endpoint & $K$ & $n$ & $\theta$ & $\prem$ & $\bar\prem(\theta)$ & $L(r)$ & $D_{\mathrm{rel}}$\\
\midrule
IST-3 \cite{ist3} & Oxford Handicap, 6\,mo & $7$ & $3035$ & $0.515$ & $0.9997$ & $1.776$ & $1.001$ & $0.03\%$\\
Auranofin RA \cite{bombardier1986} & self-assessment, 1\,mo & $5$ & $299$ & $0.555$ & $1.012$ & $1.759$ & $1.038$ & $1.35\%$\\
 & self-assessment, 3\,mo & $5$ & $296$ & $0.587$ & $0.960$ & $1.730$ & $1.001$ & $0.44\%$\\
 & self-assessment, 5\,mo & $5$ & $293$ & $0.593$ & $0.949$ & $1.723$ & $1.001$ & $0.23\%$\\
Shoulder-tip pain \cite{lumley1996} & pain score, day 2 & $5$ & $41$ & $0.163$ & $0.531$ & $1.022$ & $1.093$ & $13.7\%$\\
\bottomrule
\end{tabular}
\end{table}

\paragraph{Provenance.} We state sources exactly and distinguish what we verified from what we did not. The IST-3 counts were read from Table~2 of the primary publication \cite{ist3}; arm totals reproduce the published $n_1=1520$ and $n_2=1515$. The shoulder-tip pain counts were transcribed from Table~1 of \cite{schuurhuis2025} and then \emph{independently confirmed}, category by category and in both arms, against the \texttt{shoulder} data object distributed in the \texttt{nparLD} package \cite{nparld}, whose evening-of-day-2 measurement reproduces the table exactly with group sizes $22$ and $19$. The arthritis counts were computed by us from the patient-level \texttt{arthritis} data object distributed in the \texttt{multgee} package \cite{multgee}, cross-tabulating the five-level self-assessment against treatment at each follow-up; the object contains $302$ patients ($149$ versus $153$) with $3$, $6$ and $9$ missing responses at one, three and five months, giving the complete-case totals $299$, $296$ and $293$ shown, consistent with the complete-case sample sizes reported in analyses of these data \cite{touloumis2013}. We did \emph{not} independently reconcile the arthritis counts against the original clinical report\cite{bombardier1986}, which we could not obtain; those rows should be read as software-dataset reproductions rather than primary-source extractions.

Five is a small number for a structural reason: published trials with ordinal endpoints rarely report the complete arm-level cross-tabulation that exact evaluation of $\prem$ requires, the conventional report being a dichotomized proportion, a common odds ratio, or a bar chart without counts. Table~\ref{tab:realdata} contains every dataset we could locate with sufficient information. It is a convenience sample of what is publicly recoverable, not a systematic review, and is plausibly biased toward datasets that have become methodological benchmarks.

Four of the five are conservative and one---the one-month arthritis assessment---is anti-conservative by $1.2\%$. The pain study is the least comfortable row: with $n=41$ and a large effect ($\theta=0.163$) it has the most unbalanced components of the set ($r=1.88$, so balanced allocation costs $9.3\%$) and by far the largest pooled-tie distortion ($13.7\%$, the second-order term of Proposition~\ref{prop:pool} being large because the arms differ a great deal). Its premium of $0.53$ means the calculation would have over-sized substantially. Small trials with large effects are where the companion quantities, not the premium, deserve attention.

\subsection*{Generated alternatives}

The control distribution for the scenario panel is likewise not stylized. We extracted the arm-level outcome table of the third International Stroke Trial (IST-3) \cite{ist3}, a randomized trial of intravenous rt-PA in $3035$ patients with acute ischaemic stroke whose primary outcome was the Oxford Handicap Score (a seven-level variant of the modified Rankin Scale) at six months. The published counts, for OHS $0,\dots,5$ and death, are $116,204,214,193,140,246,407$ in the control arm ($n=1520$) and $138,225,191,235,115,203,408$ under rt-PA ($n=1515$). From the real control arm we generated alternatives under three mechanisms---proportional cumulative odds; bounded non-proportional alternatives adding independent category-specific cumulative-logit perturbations up to $\Delta\in\{0.2,0.4,0.6\}$; and an adversarial crossing family shifting survivors toward favourable categories while transferring up to $0.15$ additional mass to death---and repeated the exercise on four stylized control shapes (severe large-vessel-occlusion, mild stroke, moderate, high-mortality) as a check that nothing depends on the single extracted population. Restricting to $\theta\in[0.55,0.70]$, the range these trials are powered for, leaves $4{,}711$ scenarios.

\begin{figure}[t]
\centering
\includegraphics[width=.8\textwidth]{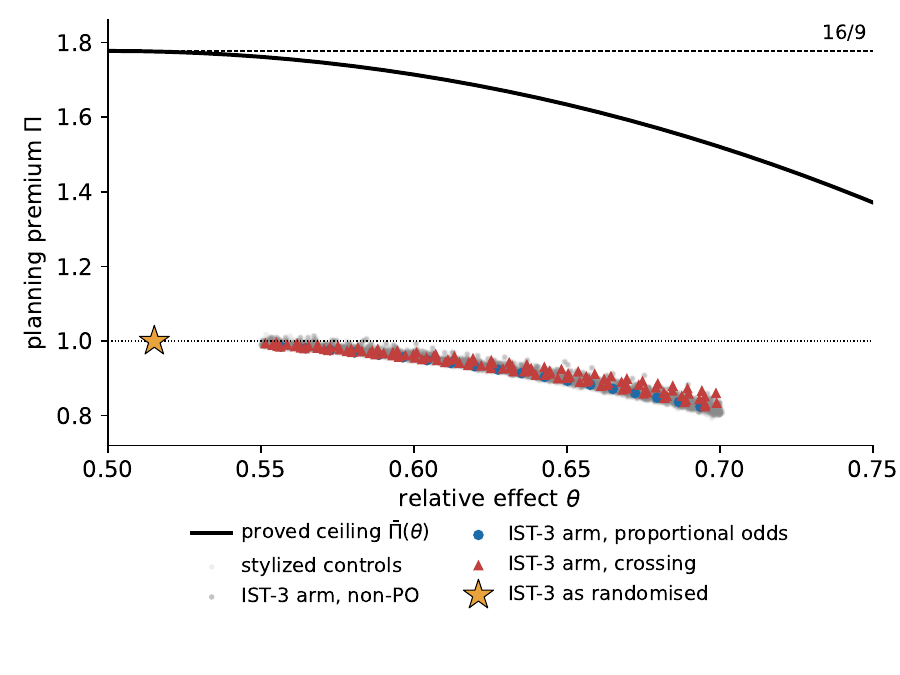}
\caption{The proved ceiling $\bar\prem(\theta)$ against realized premia. Grey points use the extracted IST-3 control arm (light grey: stylized controls); the star is the trial as randomised.}
\label{fig:envelope}
\end{figure}

For the observed IST-3 pair the premium is $0.9997$: the null-calibrated calculation would have been accurate to three decimal places, with $\sigma_1/\sigma_2=0.954$, allocation loss $1.0006$, and pooled-tie distortion $0.03\%$. Across the $4{,}711$ generated scenarios (Figure~\ref{fig:envelope}; full breakdown by mechanism in the Supporting Information) the premium never exceeded $1.019$, with median $0.926$ and $48$ of $4{,}711$ configurations anti-conservative, so the calculation was mildly \emph{conservative} across these scenarios. That is directionally compatible with the pooled-tie effect of Proposition~\ref{prop:pool}, though that proposition alone does not determine the sign of $\prem-1$. Even the worst anti-conservative case was a $1.9\%$ variance understatement, against a worst-case understatement of $43.2\%$ at the same effect size---equivalently a $1.9\%$ versus $76.2\%$ excess over the calculated sample size. The component ratio stayed near balance in every scenario examined ($r\in[0.60,1.35]$, median $0.997$), so $L(r)\le1.064$ and the compounding of \eqref{eq:lam} never activated; the pooled-tie distortion \eqref{eq:pool} ranged over $[0.05\%,4.7\%]$, always conservative and growing with effect size as its second-order character predicts.

\subsection{Non-shift treatment mechanisms}\label{sec:nonshift}

The scenarios above are shift-type, which is the obvious objection: the extremal configuration is not a shift, so a panel of shifts cannot say much about how close practice comes to it. We therefore evaluated, all on the extracted IST-3 control arm, four mechanism families that deliberately violate the shift assumption---a responder subgroup piling up in one category (``new mode'', $39$ configurations); mass moved symmetrically to both extremes (``dispersion only'', $11$); benefit for most with toxicity for a subgroup ($44$); and outcomes compressed at one end of the scale or a cured fraction (ceiling, floor and cure mixture, $47$)---together with a fifth family that perturbs the control arm toward a point mass while the comparator splits to either side ($18$).

There are three findings; the full breakdown is in the Supporting Information. First, non-shift mechanisms do push the premium above one more often and further than shift-type ones: the largest value outside that family is $1.091$, against $1.019$ across the entire shift-type panel. Second, the direction is systematic rather than sporadic. A mechanism that increases dispersion without moving location---treatment moves mass symmetrically toward both ends---was anti-conservative in all $11$ configurations examined, and a mechanism combining benefit with toxicity in $23$ of $44$. This is the expected consequence of the geometry: both push mass outward from the centre while leaving the pooled distribution comparatively concentrated, which is a weak version of the extremal configuration. Ceiling, floor and cure mechanisms move in the opposite direction and are uniformly conservative.

Third, and most usefully for a planner, the fifth family shows the bound is not decorative. Making the control arm progressively closer to a point mass while the comparator splits to either side, the premium rises through $1.23$, $1.58$ and reaches $1.752$ at a control arm placing $99\%$ of its mass in one category---$98.5\%$ of the ceiling $\bar\prem(\tfrac12)$. The worst case is reachable, by distributions that are recognizable in advance, which is what makes the design check of Section~\ref{sec:practice} worth performing.

\subsection{From asymptotic variance to realized power}\label{sec:finite}

The premium is a ratio of asymptotic variances, so it must be checked that it predicts finite-sample power at the sample sizes these designs imply. We planned $N$ by the tie-corrected null-calibrated formula for $80\%$ power at $\alpha=0.05$ two-sided, then simulated the studentized Brunner--Munzel test\cite{brunner2000} at that $N$ ($3{,}000$--$4{,}000$ replicates, seven categories), comparing realized power with the first-order prediction from the premium alone, $\Phi\{(z_{1-\alpha/2}+z_{1-\beta})/\sqrt{\prem}-z_{1-\alpha/2}\}$.

Across all eight settings---three IST-3 shifts, a new-mode alternative, the extremal family, and shifts calibrated so that $N_{\mathrm{plan}}$ equals $100$, $200$ and $500$---the prediction was accurate to within $2.1$ percentage points, the residual being a small negative bias attributable to the studentized test at moderate $N$ rather than to the premium (full table in the Supporting Information). At the extremal configuration the agreement is $0.567$ against $0.560$ predicted: where the premium is large, essentially the entire power deficit is accounted for by it. The asymptotic quantity therefore transfers to the sample sizes at which these trials are actually run, from $N=100$ upward.

\section{Discussion}\label{sec:disc}

\subsection{Summary of the results}

The premium \eqref{eq:prem} isolates the variance substitution described in Section~\ref{sec:intro}, which is forced rather than chosen; the aim here is to price it, not to remove it. Lemma~\ref{lem:null} and Theorems~\ref{thm:global}--\ref{thm:equality} establish that the substitution is exact at any non-degenerate null, that its cost is bounded by the envelope, and that within the two-atom class the bound is attained only on the extremal family.

Although that configuration is highly structured, the bound is not merely theoretical: in the worst case the required sample size is substantially larger than the conventional calculation, and Section~\ref{sec:finite} confirms the deficit materializes as lost power. It requires a perfectly homogeneous control arm while treatment splits an identical population between categories on opposite sides of the control arm's category, leaving nobody at baseline: the competing requirements described in Section~\ref{sec:intro}, met by the extremal family and, within the two-atom class, only by it. Prognostic heterogeneity keeps real arms dispersed, and treatments move mass \emph{along} the scale rather than symmetrically \emph{away} from an interior point. The protection does not come from marginal regularity---concentration caps and even unimodality of both arms leave most of the worst case intact (Supporting Information)---but from the coupling of the arms through the treatment mechanism. Section~\ref{sec:nonshift} shows that mechanisms which break that coupling, in particular those widening dispersion without moving location, are systematically anti-conservative, and that a near-degenerate control arm reaches $98.5\%$ of the ceiling.

\subsection{Scope of the bound}\label{sec:notcover}

One example is worth stating explicitly, because the constant invites over-reading. Suppose a planner anticipates $\theta=0.75$ with dispersed arms, computes $N$ from the tie-corrected Noether formula, and inflates by $16/9$ ``to be safe.'' The premium for such a configuration is near $1$, so the inflation is almost pure waste---but more importantly, it does not protect against the error that actually threatens this design. Noether's formula is a local approximation whose accuracy degrades through the $(\theta-\tfrac12)^{-2}$ factor as the effect grows, and that error is present even when the variance is specified exactly. Our premium holds the effect fixed and asks only what the variance substitution costs; it is silent on the rest of the formula.

The same boundary applies elsewhere. The bound is a balanced-allocation statement---\eqref{eq:lam} shows it fails badly as $\lambda\to1$. It concerns asymptotic variance, not finite-sample behaviour of any particular test. And it is a planning-stage result: it says nothing about Type~I error under contamination, about estimand mismatch between rank-based and mean-based questions, or about whether $\theta$ is the right target in the first place. A bounded planning premium is not a licence to stop thinking about the analysis.

\subsection{Using the bound in practice}\label{sec:practice}

The constant $16/9$ is not a recommended inflation factor: Section~\ref{sec:empirical} shows realized premiums close to one, so routinely inflating by $77.8\%$ would waste participants. The bound is a \emph{design sensitivity certificate}, converting an unquantified worry about unknown outcome shapes into a number that can be reported. It has three uses.

\emph{Reporting a worst case.} Compute $N_{\mathrm{plan}}$ as usual, then report $\bar\prem(\theta)N_{\mathrm{plan}}$ as the largest requirement consistent with any ordinal configuration at the anticipated effect (Table~\ref{tab:envelope}). Use the envelope at the planned $\theta$ rather than the flat $16/9$, which needlessly overstates the exposure of any trial powered for $\theta>\tfrac12$. Note also that the ceiling depends on the convention: $16/9$ with the tie correction, $3/2$ without (Corollary~\ref{cor:uncorrected}), so protocols should state which null variance was used.

\emph{Licensing the conventional calculation.} Where prior knowledge indicates a dispersed control arm and a shift-type mechanism, a protocol can state that the configurations producing the worst case are excluded by the anticipated distributions and that realized premiums in comparable settings sit near one. This is the more common use: the theorem licenses \emph{not} inflating, and does so with a guarantee rather than an assurance.

\emph{Auditing a completed trial.} $\prem$ is computable from the observed arm-level table, as in Table~\ref{tab:realdata}. Reporting $\widehat\prem$ tells readers how accurate the original variance assumption was, and accumulating these across trials in a disease area would build the evidence base this paper lacks.

\paragraph{When to distrust the calculation.} Two design features move a trial toward the extremal geometry or defeat the premise of the bound, and either should trigger component-based planning by the method of Happ et al.\cite{happ2019} using pilot or registry data. The first is an arm anticipated to be nearly degenerate---a severe floor or ceiling, most patients expected in a single category---faced by a comparator that plausibly splits patients to \emph{both} sides of that category. The second is an externally forced allocation ratio materially beyond $2{:}1$, which voids the cap by \eqref{eq:lam}. Absent both, and with a dispersed anticipated control distribution and a shift-type mechanism, the conventional calculation was accurate to within a few percent in every scenario we examined, more often conservative than not, with a $5\%$ inflation of $N$ covering all of them. That last statement is an empirical finding about the settings of Section~\ref{sec:empirical}, not a guarantee; the guarantee is the ceiling.

\paragraph{A procedure.} For a trial with an ordinal endpoint to be analysed by a rank-based test, we suggest the following four steps, which add a few minutes to a calculation that is already being done.

\begin{enumerate}
\item Compute $N_{\mathrm{plan}}$ as usual, and record whether the tie-corrected or uncorrected null variance was used, since the ceiling differs ($16/9$ against $3/2$).
\item Inspect the anticipated control distribution. If no category is expected to hold more than about half the patients, and the anticipated mechanism is a shift rather than a symmetric widening, proceed with $N_{\mathrm{plan}}$; the evidence of Sections~\ref{sec:empirical} and~\ref{sec:nonshift} is that the error is then within a few percent, usually conservative.
\item If instead one arm is anticipated to be nearly degenerate, or the mechanism plausibly moves mass outward to both ends of the scale, or allocation is forced materially beyond $2{:}1$, plan from the variance components directly\cite{happ2019}, using pilot or registry data; failing that, size the trial at $\bar\prem(\theta)N_{\mathrm{plan}}$ from Table~\ref{tab:envelope}.
\item Report both $N_{\mathrm{plan}}$ and the worst-case $\bar\prem(\theta)N_{\mathrm{plan}}$ in the protocol, and report the realized $\widehat\prem$ on completion.
\end{enumerate}

\paragraph{A protocol summary.} These fit into four lines that can be tabulated in a statistical analysis plan. Take a seven-category endpoint with anticipated pooled distribution
\[
(0.10,\;0.16,\;0.18,\;0.18,\;0.15,\;0.11,\;0.12),
\]
anticipated $\theta=0.60$, $90\%$ power and $\alpha=0.05$ two-sided:

\begin{center}
\small
\begin{tabular}{lr}
\toprule
Conventional sample size $N_{\mathrm{plan}}$ & $344$\\
Worst-case premium $\bar\prem(0.60)$ & $12/7\approx1.714$\\
Worst-case sample size & $590$\\
Realized premium $\widehat\prem$ (reported on completion) & ---\\
\bottomrule
\end{tabular}
\end{center}

\noindent Sample sizes are rounded up to the nearest even total throughout, so that $344\times12/7=589.7$ becomes $590$; the same rule is used in Table~\ref{tab:envelope}. The first three lines are computable at the design stage from quantities the protocol already specifies; the fourth is filled in at reporting. Nothing about how the sample size is computed changes.

\subsection{Implications for teaching}

Introductory treatments often pair ``nonparametric tests need no assumptions'' with rituals of assumption-checking for parametric ones. The accurate statement, now with exact constants attached, is that rank procedures are assumption-free for \emph{validity} (with studentized implementations \cite{brunner2000,neubert2007}) but not for \emph{planning}---and that the planning error from ignoring shape is bounded by a single constant and was small in the clinical and generated scenarios examined here, a lesson consonant with recent arguments for defaulting to robust procedures rather than pretesting \cite{delacre2017,zimmerman2004}.

\subsection{Limitations}\label{sec:limits}

Five limitations bound the claims, and the first two are the ones on which the paper is most exposed.

\emph{Novelty.} The contribution is a single bound, and part of the machinery is classical: the cap on the numerator is Birnbaum--Klose \cite{birnbaum1957,rustagi1961}, the kernel-variance identity is Bamber's \cite{bamber1975,brunner2025}, and the idea of studying a variance \emph{ratio} appears in \cite{schuurhuis2025}, with a different denominator and for a different purpose. The imperfection of the null substitution is itself known and is patched heuristically in \cite{poehlmann2024}. What we claim is the sharp comparison against the tie-corrected null variance, its envelope, the equality characterization, and Corollary~\ref{cor:uncorrected}. The author has read the primary sources cited above rather than relying on secondary description, but a literature this scattered admits no exhaustive search; if an equivalent envelope exists, the paper reduces to an exposition, and a reader who knows of one should say so.

\emph{The equality set.} Theorem~\ref{thm:equality} characterizes equality only within the two-atom class, plus a necessary block-sparsity condition in general. Whether equality can occur elsewhere is open; Appendix~\ref{app:eqnum} explains exactly where the natural arguments break and what the numerical evidence says. This does not affect the bound itself, which is proved in full generality, but it does mean the phrase ``attained only when one arm is degenerate'' should be read with that qualification.

\emph{Length of the case analysis.} The proof of Theorem~\ref{thm:global} is elementary and requires no computer verification---the symmetry reduction leaves six orbits, three of which are immediate and three of which reduce to the two displayed identities---but it is still a case analysis, and a shorter or more conceptual route to the envelope would be welcome. The constrained maxima of the Supporting Information remain certified only as lower bounds, and sharp suprema under unequal allocation $\lambda\neq\tfrac12$ are open.

\emph{Empirical scope.} Section~\ref{sec:empirical} reports five extracted arm pairs from three disease areas, a scenario panel built on the IST-3 control arm and four stylized shapes, and five non-shift mechanism families. Five real comparisons are more than one but still few, and they are concentrated in stroke, rheumatology and post-surgical pain; the generated alternatives, shift-type and otherwise, are model constructions rather than observed trials. The empirical magnitudes are therefore evidence about the settings examined, not a general statement about ordinal endpoints, and the mismatch between a theorem covering every ordinal distribution and a panel drawn from a few disease areas should be read as a limit on the empirical claims, not on the bound. Disease areas with heavier tie structures or routinely unbalanced designs could behave differently, and further extraction is the obvious next step.

\emph{Asymptotics.} The premium is an asymptotic variance ratio. Section~\ref{sec:finite} shows it transfers accurately to realized power at the sample sizes these designs imply, but that check covers three configurations at moderate $N$; small-trial regimes, where the studentized permutation test \cite{neubert2007} is preferable, need separate study.

\emph{Scope of the object.} Everything here concerns the \emph{planning} stage. Bounded planning premia say nothing about analysis-stage robustness---Type~I error under heteroscedasticity, contamination, or estimand mismatch between rank-based and mean-based questions are separate problems and are not made small by these results.

\medskip
\noindent The contribution is therefore a sharp worst-case guarantee for the planning-variance substitution used in standard rank-based sample-size calculations, together with an explicit extremal configuration, a characterization of equality within the two-atom class, and evidence that this worst case is distant from the clinical scenarios we examined. Rank tests are distribution-free for validity but not for planning; the price of that distinction is not merely finite but exactly known.

\appendix

\section{The equality set: what is not proved}\label{app:eqnum}

\begin{remark}[What is not proved]\label{rem:eqgap}
Part~(ii) is confined to the two-atom class, and we do not prove that (i) exhausts the equality set over all of $\Delta_{K-1}^2$. The obstruction is that equality in Lemma~\ref{lem:kernel} requires only that the kernel $h$ be additive on $\operatorname{supp}(p)\times\operatorname{supp}(q)$, which does not force a degenerate arm: for $\operatorname{supp}(p)=\operatorname{supp}(q)=\{y_1,y_2\}$ the kernel matrix is $\binom{1/2\;\;1}{0\;\;1/2}$, additive with both variance components positive. Nor does Lemma~\ref{lem:reduce} help, since it asserts that \emph{some} maximizer is two-atomic, not that every maximizer is. Numerical search finds no equality configurations outside (i), with margins reported in Appendix~\ref{app:eqnum}; we therefore expect (i) to be the full equality set but state it as an expectation.
\end{remark}

Complementing Appendix~\ref{app:eqnum}: searching for configurations that approach the envelope while forcing both arms to carry a second atom of mass at least $0.01$ leaves an envelope deficit of at least $4.9\times10^{-3}$; against a degenerate arm, giving $q$ three or more atoms leaves a deficit of at least $1.5\times10^{-3}$; and requiring $\tau\ge0.02$ leaves a deficit of at least $4.7\times10^{-2}$. The two additive configurations displayed in Appendix~\ref{app:eqnum} fall well short, with deficits $0.147$ and $0.754$.

\section{Reproducibility}\label{sec:verify}

The two polynomial identities \eqref{eq:Fdecomp} and \eqref{eq:Sform}, and every inequality in the proof, are re-checked symbolically by \texttt{handproof.py}, which also confirms Lemma~\ref{lem:sym} numerically in dimensions $2$--$8$. Independently of Lemma~\ref{lem:reduce}, $\Psi\le0$ was verified by exhaustive enumeration over full simplices on rational mass grids---$(K,\text{grid})=(3,\tfrac18),(4,\tfrac16),(5,\tfrac15),(6,\tfrac14),(7,\tfrac13)$, some $48{,}000$ exactly evaluated pairs in integer arithmetic---with maximum exactly $0$, attained only at coincident point masses (\texttt{verify\_reduction.py}).

The global numerical search and its constrained variants are described in the Supporting Information. Every figure and table in the paper is regenerated by a single driver script: Table~\ref{tab:envelope} and the Supporting Information from \texttt{search.py} and \texttt{ccurve.py}, Figure~\ref{fig:envelope} and the Supporting Information tables from \texttt{empirical\_real.py} (which reads the IST-3 counts given in Section~\ref{sec:empirical} and writes \texttt{empirical\_real\_results.csv}), and Section~\ref{sec:finite} from \texttt{finite\_sample.py}. Random seeds are fixed in each script.

\section{The identities for orbits 4--6}\label{app:ident}

Every step below is an identity between polynomials with integer or rational coefficients, verifiable by expansion with no computer algebra beyond that, followed by elementary bounds on the unit square. The accompanying script re-checks each identity symbolically, but the verification does not depend on it.

\emph{Orbit 6 ($PQ{<}PQ$).} Here $\Psi=-\tfrac{3}{16}F$ with $F$ of bidegree $(4,4)$, and
\begin{equation}\label{eq:Fdecomp}
F\;=\;7\bigl[\alpha(1-\alpha)+\beta(1-\beta)\bigr]\;+\;(\alpha-\beta)^2\,R,
\qquad
R=(u-1)(7u-8)+2\alpha\beta,\quad u=\alpha+\beta,
\end{equation}
an identity verified by expansion. If $u\le1$ or $u\ge\tfrac87$ then $(u-1)(7u-8)\ge0$ and $R\ge0$. If $1<u<\tfrac87$ then $\alpha,\beta\le1$ forces $\alpha\beta\ge u-1>0$, so
$R\ge (u-1)(7u-8)+2(u-1)=(u-1)(7u-6)>0$. Hence $R\ge0$, $F\ge0$, and $\Psi\le0$.

\emph{Orbits 4 and 5 (one shared position).} These are the cases where the two arms overlap, so $\tau>0$ and the tie term contributes; they are handled together because, after the substitutions below, they produce the same functional form.

After removing manifestly signed factors, $P{<}PQ{<}Q$ gives $\Psi=-\tfrac{3}{16}\beta\bar\alpha H$ and $PQ{<}P{<}Q$ gives $\Psi=-\tfrac{3}{16}\alpha\beta\,\widetilde G$, with $H,\widetilde G$ of bidegree $(3,3)$; it suffices that $H\ge0$ and $\widetilde G\ge0$ on $[0,1]^2$. Each vanishes at exactly one corner, and translating that corner to the origin puts both into the \emph{same} shape: writing $(\alpha,\beta)=(x,1-y)$ for $H$ and $(\alpha,\beta)=(1-x,1-y)$ for $\widetilde G$,
\begin{equation}\label{eq:Sform}
S(x,y)\;=\;\varphi(x)+\varphi(y)+x\,y\,W(x,y),
\qquad
\varphi(t)=t\,(7+8t-7t^2),
\end{equation}
with $W=W_H$ and $W=W_{\widetilde G}$ respectively. The translation is the natural move: each cofactor is strictly positive except at one corner of the square, so shifting that corner to the origin exposes the vanishing behaviour, and what remains is a statement that a positive combination of $x$ and $y$ dominates their product. Three elementary facts finish the proof.

\emph{(a)} $\varphi(t)-7t=t^2(8-7t)\ge0$ on $[0,1]$, so $\varphi(x)+\varphi(y)\ge7(x+y)$.

\emph{(b)} $x+y-2xy=x(1-y)+y(1-x)\ge0$ on $[0,1]^2$, so $7(x+y)\ge14xy$.

\emph{(c)} $W\ge-14$ on $[0,1]^2$ in both cases. For $W_H=-x^2y^2-x^2y-xy^2+9x^2+18xy+9y^2-33x-33y+25$, the three cubic terms satisfy $-x^2y^2-x^2y-xy^2\ge-3xy$ (each factor is at most $1$), and $-3xy\ge-\tfrac34u^2$ with $u=x+y$, so
$W_H\ge 9u^2-\tfrac34u^2-33u+25=\tfrac{33}{4}u^2-33u+25$, which is decreasing on $[0,2]$ and equals $-8$ at $u=2$; hence $W_H\ge-8$. For $W_{\widetilde G}=-x^2y^2-x^2y+9x^2-9xy^2+18xy-9x-15y^2+27y-11$, use $-x^2y^2-x^2y\ge-2y$ and $-9xy^2\ge-9y^2$ to get $W_{\widetilde G}\ge 9x^2+18xy-9x-24y^2+25y-11$, which is concave in $y$ and so minimized at $y\in\{0,1\}$: at $y=0$ it is $9x^2-9x-11\ge-\tfrac{53}{4}$, at $y=1$ it is $9x^2+9x-10\ge-10$. Hence $W_{\widetilde G}\ge-\tfrac{53}{4}$.

Combining, $S\ge7(x+y)-14xy\ge0$ in both cases, so $\Psi\le0$.

\section{The six orbits}\label{app:cert}

By Lemma~\ref{lem:reduce} it suffices to bound $\Psi$ of \eqref{eq:psi} over pairs with at most two atoms per arm; write the $p$-masses as $(\alpha,\bar\alpha)$ and the $q$-masses as $(\beta,\bar\beta)$ in increasing position, with single-atom arms the edges $\alpha,\beta\in\{0,1\}$. Only the order/coincidence pattern of the at most four positions matters. There are thirteen such patterns---six with four distinct positions, six with one shared position, one with both shared---and by Lemma~\ref{lem:sym} they fall into the six orbits below, each handled in the proof of Theorem~\ref{thm:global}. We write $PQ$ for a shared position.

\begin{center}
\small
\begin{tabular}{llll}
\toprule
\# & orbit & representative $\Psi$ & argument\\
\midrule
1 & $P{<}P{<}Q{<}Q$, $Q{<}Q{<}P{<}P$ & $0$ & $\theta$ constant, $s=\tau=0$\\
2 & $P{<}Q{<}Q{<}P$, $Q{<}P{<}P{<}Q$ & $-3\alpha\bar\alpha\beta\bar\beta$ & manifest\\
3 & $P{<}Q{<}P{<}Q$, $Q{<}P{<}Q{<}P$ & $-3\alpha\bar\alpha\beta\bar\beta(1-\bar\alpha\beta)(1+2\beta-\alpha-\alpha\beta)$ & manifest\\
4 & $P{<}PQ{<}Q$, $Q{<}PQ{<}P$ & $-\tfrac{3}{16}\beta\bar\alpha H$ & decomposition \eqref{eq:Sform}\\
5 & $PQ{<}P{<}Q$ and $3$ others & $-\tfrac{3}{16}\alpha\beta\widetilde G$ & decomposition \eqref{eq:Sform}\\
6 & $PQ{<}PQ$ & $-\tfrac{3}{16}F$ & decomposition \eqref{eq:Fdecomp}\\
\bottomrule
\end{tabular}
\end{center}

\noindent The bidegree-$(3,3)$ cofactors of orbits 4 and 5 are
\begin{align*}
H\;=\;&\alpha^3\beta^3-4\alpha^3\beta^2-4\alpha^3\beta+\alpha^2\beta^3+15\alpha^2\beta^2-8\alpha^2\\
&{}-9\alpha\beta^3-6\alpha\beta^2+14\alpha\beta+8\alpha+7\beta^3-13\beta^2-2\beta+8,\\[2pt]
\widetilde G\;=\;&-\alpha^3\beta^3+4\alpha^3\beta^2+4\alpha^3\beta+12\alpha^2\beta^3-21\alpha^2\beta^2-12\alpha^2\beta+8\alpha^2\\
&{}-36\alpha\beta^3+48\alpha\beta^2+10\alpha\beta-24\alpha+32\beta^3-44\beta^2-4\beta+24,
\end{align*}
so that $\Psi=-\tfrac3{16}\beta\bar\alpha H$ on orbit~4 and $\Psi=-\tfrac3{16}\alpha\beta\widetilde G$ on orbit~5. The bidegree-$(4,4)$ cofactor of orbit 6 is
\begin{align*}
F\;=\;&7\alpha^4+2\alpha^3\beta-15\alpha^3-18\alpha^2\beta^2+15\alpha^2\beta+\alpha^2+2\alpha\beta^3\\
&{}+15\alpha\beta^2-16\alpha\beta+7\alpha+7\beta^4-15\beta^3+\beta^2+7\beta,
\end{align*}
with $\Psi=-\tfrac3{16}F$. Substituting $(\alpha,\beta)=(x,1-y)$ in $H$ and $(\alpha,\beta)=(1-x,1-y)$ in $\widetilde G$ produces \eqref{eq:Sform} with
\begin{align*}
W_H&=-x^2y^2-x^2y-xy^2+9x^2+18xy+9y^2-33x-33y+25,\\
W_{\widetilde G}&=-x^2y^2-x^2y+9x^2-9xy^2+18xy-9x-15y^2+27y-11 .
\end{align*}

\section{Global numerical search and interpretable constraints}\label{app:si:search}

Theorem~\ref{thm:global} settles the supremum, so the search reported here is
corroboration rather than evidence; we retain it because it was run before the
proof existed and because the constrained variants carry information the theorem
does not. We searched $\Delta_6\times\Delta_6$ ($K=7$, the modified Rankin
dimension) in the three layers of Algorithm~\ref{alg:search}.

\begin{algorithm}[htbp]
\caption{Global search for $\sup\prem$ over $\Delta_{K-1}^2$}
\label{alg:search}
\begin{algorithmic}[1]
\State \textbf{Layer 1 (enumeration).} Enumerate all $p$ with at most $3$
occupied categories on a mass grid of step $0.1$; form all ordered pairs
$(p,q)$; evaluate $\prem$ in closed form; retain the top $10^3$ and all pairs
within $10^{-3}$ of the running maximum.
\State \textbf{Layer 2 (refinement).} From each retained pair, run sequential
quadratic programming on $(p,q)$ with simplex constraints, both unconstrained in
$\theta$ and subject to $\theta\ge\theta_0$ for each $\theta_0$ in the reporting
grid; optionally add $\max_k p_k\le c$, $\max_k q_k\le c$, or unimodality of
both arms.
\State \textbf{Layer 3 (multistart).} Repeat Layer 2 from $300$
Dirichlet-random interior starts with dispersion drawn from
$\mathrm{U}(0.05,3)$.
\State \textbf{Report} the maximum over all layers and the argmax support pattern.
\end{algorithmic}
\end{algorithm}

Layer~1 evaluated $2{,}119{,}936$ ordered pairs. No configuration exceeded
$16/9$; every refined optimum coincided with the family of Theorem~\ref{thm:main}(i), and the
$\theta$-constrained maxima matched the envelope to five decimals. Unit
tests---the identity $\E\,a(X)=\E\,b(Y)=\theta$ against brute force on $4{,}000$
random pairs, the null-exactness lemma on random null configurations, and the
worked $K=5$ example---pass at machine precision.

\begin{table}[htbp]
\centering
\caption{Largest premium found ($\theta\ge0.55$, $\lambda=\tfrac12$, $K=7$)
under interpretable marginal constraints. Because Layers~2--3 are local
searches, these are certified \emph{lower} bounds on the constrained suprema:
each is attained by an explicit feasible configuration, but local search cannot
certify global optimality.}
\label{tab:ccurve}
\small
\begin{tabular}{lcccccccc}
\toprule
cap $c$ on $\max_k p_k$ and $\max_k q_k$ & $1.0$ & $0.6$ & $0.5$ & $0.4$ & $0.3$ & $0.25$ & $0.20$ & $0.18$\\
\midrule
concentration cap only & $1.762$ & $1.592$ & $1.577$ & $1.536$ & $1.474$ & $1.372$ & $1.156$ & $1.062$\\
cap $+$ unimodality (both arms) & $1.651$ & --- & --- & $1.269$ & $1.168$ & $1.123$ & --- & ---\\
\bottomrule
\end{tabular}
\end{table}

Two structural findings emerge (Table~\ref{tab:ccurve}, Figure~\ref{fig:ccurve}).
First, capping the concentration of \emph{one} arm accomplishes almost nothing:
constraining $\max_k p_k\le0.2$ while leaving $q$ free reduces the maximum only
from $1.762$ to $1.706$, because the optimizer exchanges the roles of the arms.
The mechanism requires \emph{one} near-degenerate arm and does not care which.
Second, symmetric caps at clinically dispersed levels leave most of the worst
case intact ($1.474$ at $c=0.3$), and adding unimodality of both arms still
permits $1.168$. Yet the realized premia of the main text, whose arms satisfy
exactly these marginal properties, never exceed $1.02$. Marginal shape
constraints therefore explain only part of the gap; the binding structure is the
joint one.

\begin{figure}[htbp]
\centering
\includegraphics[width=.8\textwidth]{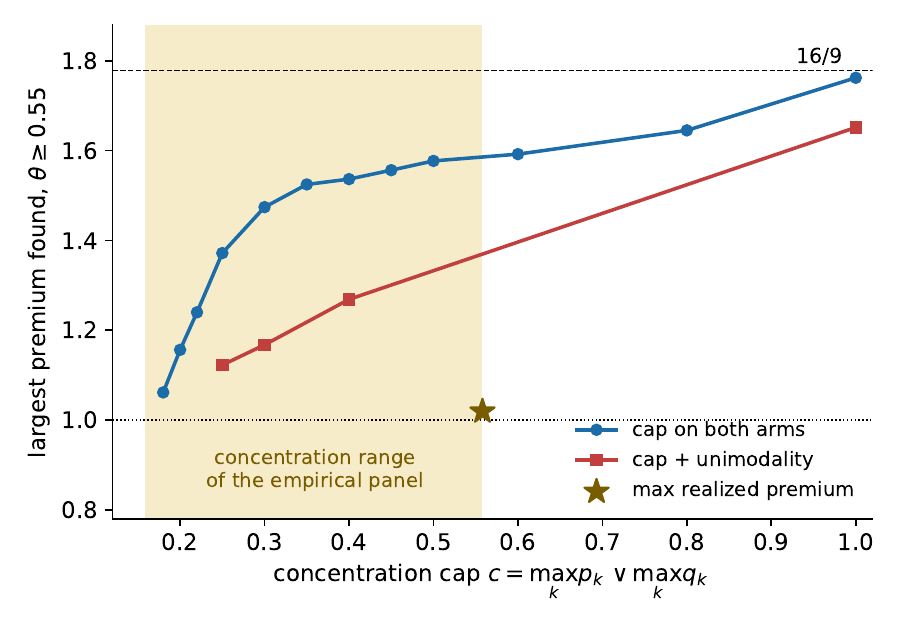}
\caption{Largest premium found under symmetric concentration caps and under caps
plus unimodality of both arms ($\theta\ge0.55$, $\lambda=\tfrac12$, $K=7$;
certified feasible points). The shaded band marks the range of $\max_k$ mass
across the arms of the empirical panel; the star marks the largest premium
realized there.}
\label{fig:ccurve}
\end{figure}

\section{Scenario panel by mechanism}\label{app:si:panel}

Table~\ref{tab:panel} gives the breakdown summarized in Section~\ref{sec:empirical}: $4{,}711$ scenarios with $\theta\in[0.55,0.70]$, generated from the
extracted IST-3 control arm and from four stylized control shapes under
proportional-odds (PO), bounded non-proportional (non-PO), and crossing
mechanisms.

\begin{table}[htbp]
\centering
\caption{Realized planning premium $\prem(\tfrac12)$ for $\theta\in[0.55,0.70]$.
``$\#\,\prem>1$'' counts configurations in which the null-calibrated calculation
under-sizes.}
\label{tab:panel}
\small
\begin{tabular}{llrcccr}
\toprule
control arm & mechanism & $n$ & min & median & max & $\#\,\prem>1$\\
\midrule
IST-3 (extracted) & proportional odds & $19$ & $0.824$ & $0.923$ & $0.985$ & $0$\\
 & bounded non-PO & $2{,}399$ & $0.804$ & $0.926$ & $1.012$ & $27$\\
 & crossing & $114$ & $0.826$ & $0.933$ & $0.997$ & $0$\\
\midrule
stylized (4 shapes) & proportional odds & $80$ & $0.810$ & $0.919$ & $0.986$ & $0$\\
 & bounded non-PO & $1{,}961$ & $0.798$ & $0.925$ & $1.019$ & $12$\\
 & crossing & $138$ & $0.834$ & $0.949$ & $1.019$ & $9$\\
\midrule
\multicolumn{2}{l}{pooled} & $4{,}711$ & $0.798$ & $0.926$ & $1.019$ & $48$\\
\bottomrule
\end{tabular}
\end{table}

\section{Non-shift treatment mechanisms}\label{app:si:nonshift}

Table~\ref{tab:nonshift} gives the breakdown summarized in Section~4.2 of the
main text. All families are built on the extracted IST-3 control arm except the
last, which perturbs the control arm toward a point mass.

\begin{table}[htbp]
\centering
\caption{Premium under treatment mechanisms that are not location shifts.}
\label{tab:nonshift}
\small
\begin{tabular}{llrcccr}
\toprule
mechanism & description & $n$ & min & median & max & $\#\,\prem>1$\\
\midrule
new mode & responder subgroup piles up in one category & $39$ & $0.597$ & $0.969$ & $1.091$ & $13$\\
dispersion only & mass moved symmetrically to both extremes & $11$ & $1.001$ & $1.020$ & $1.070$ & $11$\\
benefit $+$ toxicity & most improve, a subgroup moves to worst & $44$ & $0.815$ & $1.001$ & $1.041$ & $23$\\
ceiling & outcomes compressed at the favourable end & $17$ & $0.312$ & $0.796$ & $0.992$ & $0$\\
floor & outcomes compressed at the unfavourable end & $17$ & $0.401$ & $0.963$ & $0.984$ & $0$\\
cure mixture & fraction cured, remainder unchanged & $13$ & $0.654$ & $0.899$ & $0.998$ & $0$\\
\midrule
near-degenerate & control concentrated in one category, & $18$ & $1.232$ & $1.576$ & $1.752$ & $18$\\
 & comparator split to both sides & & & & &\\
\bottomrule
\end{tabular}
\end{table}

\section{Finite-sample power calibration}\label{app:si:power}

Table~\ref{tab:power} gives the calibration summarized in Section~4.3 of the
main text. $N$ was planned by the tie-corrected null-calibrated formula for
$80\%$ power at $\alpha=0.05$ two-sided; realized power was then simulated for
the studentized Brunner--Munzel test at that $N$ ($3{,}000$--$4{,}000$
replicates, seven categories). The prediction is
$\Phi\{(z_{1-\alpha/2}+z_{1-\beta})/\sqrt{\prem}-z_{1-\alpha/2}\}$.

\begin{table}[htbp]
\centering
\caption{Realized power against the power predicted from the premium. Upper
block: planned $N$ for each scenario. Lower block: shift calibrated so that the
planned $N$ equals $100$, $200$, $500$.}
\label{tab:power}
\small
\begin{tabular}{lccccc}
\toprule
scenario & $\theta$ & $\prem$ & $N$ & predicted & simulated\\
\midrule
IST-3 shift, large & $0.616$ & $0.938$ & $183$ & $0.825$ & $0.808$\\
IST-3 shift, moderate & $0.579$ & $0.971$ & $401$ & $0.811$ & $0.793$\\
IST-3 shift, small & $0.553$ & $0.987$ & $907$ & $0.805$ & $0.798$\\
new mode & $0.607$ & $0.969$ & $221$ & $0.812$ & $0.804$\\
extremal family & $0.550$ & $1.762$ & $883$ & $0.560$ & $0.567$\\
\midrule
calibrated shift & $0.634$ & $0.889$ & $100$ & $0.844$ & $0.823$\\
calibrated shift & $0.588$ & $0.943$ & $200$ & $0.823$ & $0.802$\\
calibrated shift & $0.556$ & $0.977$ & $500$ & $0.809$ & $0.808$\\
\bottomrule
\end{tabular}
\end{table}

\section{Self-contained proof of Theorem 1(ii)}\label{app:si:thm1ii}

Theorem~\ref{thm:main}(ii) states that over all pairs
$(p,q)\in\Delta_{K-1}^2$ in which at least one arm is a point mass,
$\sup\prem=16/9$, attained exactly and only by the family of Theorem~\ref{thm:main}(i) with
$t=\tfrac12$. The result is subsumed by Theorem~2, but the following proof is
self-contained and independent of the machinery of Section~\ref{sec:global}.

\begin{proof}
By the symmetry that exchanges arms and reverses the category order (which maps
$(\theta,\sigma_1^2,\sigma_2^2,m)$ to $(\theta,\sigma_2^2,\sigma_1^2,$ reversed
$m)$ and fixes $\prem(\tfrac12)$), assume $p=e_j$. Then $\sigma_1^2=0$ and
$\prem=\sigma_2^2/\{2v(m)\}$ with $m=(e_j+q)/2$.

\emph{Reduction to three atoms.} Values of $b$ depend only on position relative
to $j$: $b=0$ below, $\tfrac12$ at $j$, $1$ above. Redistributing the mass of
$q$ among categories strictly below $j$ leaves $\theta$ and $\sigma_2^2$
unchanged, while concentrating it on a single category maximizes $\sum_k q_k^3$
over that block (power-sum inequality), hence maximizes $\sum_k m_k^3$ and
minimizes $v(m)$, weakly increasing $\prem$; likewise above $j$. So it suffices
to consider $q$ with masses $(q_-,q_0,q_+)$ on one category below $j$, on $j$,
and on one category above $j$ ($q_-$ or $q_+$ may be $0$; if $j$ is an endpoint
the corresponding block is empty, a special case of what follows).

\emph{The two-variable problem.} With $\theta=q_++\tfrac{q_0}{2}$ and
$s=\theta(1-\theta)$, direct computation gives
\[
\sigma_2^2=s-\frac{q_0}{4},
\qquad
v(m)=\frac1{12}\Bigl[\,1-\Bigl(\tfrac{1+q_0}{2}\Bigr)^{3}-B^3-C^3\Bigr],
\]
where $B=\tfrac{1-\theta-q_0/2}{2}$ and $C=\tfrac{\theta-q_0/2}{2}$ satisfy
$B,C\ge0$, $B+C=\tfrac{1-q_0}{2}=:u$, and
$s-\tfrac{q_0}{4}=4BC+\tfrac{q_0(1-q_0)}{4}$. Writing $w=BC\in[0,u^2/4]$ and
using $B^3+C^3=u^3-3uw$, the claim $\prem\le\tfrac{16}{9}$, i.e.\
$6(s-\tfrac{q_0}{4})\le\tfrac{16}{9}\cdot 12\,v(m)$, rearranges to
\[
1-\Bigl(\tfrac{1+q_0}{2}\Bigr)^{3}-u^3+3uw-\tfrac{27}{32}q_0(1-q_0)
\;\ge\;\tfrac{27}{2}\,w .
\]
Since $u\le\tfrac12$, the coefficient $\tfrac{27}{2}-3u\ge12>0$, so the
inequality is hardest at $w=u^2/4$ (i.e.\ $B=C$). Substituting $w=u^2/4$ and
$u=\tfrac{1-q_0}{2}$ and clearing denominators, the left-minus-right side equals
\[
\frac{3\,q_0\,(1-q_0)\,(q_0+6)}{32}\;\ge\;0 ,
\]
an elementary polynomial identity, with equality iff $q_0\in\{0,1\}$. The case
$q_0=1$ is the null ($q=p$, degenerate); the case $q_0=0$ reduces to the family
of Theorem~\ref{thm:main}(i), whose maximum over $t$ of $16t(1-t)/\{2+t(1-t)\}$ is $16/9$,
attained uniquely at $t=\tfrac12$, with $B=C$ forcing
$q_-=q_+=\tfrac12$.
\end{proof}

\section{Code and data}\label{app:si:code}

The following accompany the manuscript. All results in the main text and in this
document are regenerated by these scripts; random seeds are fixed in each.

\begin{center}
\small
\begin{tabular}{ll}
\toprule
file & contents\\
\midrule
\texttt{core.py} & definitions of $\theta$, $\sigma_1^2$, $\sigma_2^2$, $v(\cdot)$, $\prem$; unit tests\\
\texttt{handproof.py} & symbolic verification of the identities in the proof of Theorem~2\\
\texttt{verify\_reduction.py} & exact rational enumeration of $\Psi\le0$ over full simplices\\
\texttt{convexity\_check.py} & structure of $\Psi$ on $\theta$-slices; extreme-point diagnostics\\
\texttt{boundary.py} & the boundary factorization $\Psi=-q_0G(u,w)$\\
\texttt{zerosets.py} & zero sets of $\Psi$ on the six orbit representatives\\
\texttt{search.py}, \texttt{ccurve.py} & global search and constrained variants (Section~\ref{app:si:search})\\
\texttt{empirical\_real.py} & extracted trial data and scenario panel (Sections~\ref{app:si:panel})\\
\texttt{mechanisms.py} & non-shift mechanisms (Section~\ref{app:si:nonshift})\\
\texttt{finite\_sample.py}, \texttt{calib.py} & power calibration (Section~\ref{app:si:power})\\
\texttt{fig1.py} & Figure~1 of the main text\\
\texttt{empirical\_real\_results.csv} & scenario-level output, $8{,}953$ rows\\
\texttt{mechanisms\_results.csv} & non-shift mechanism output\\
\texttt{real\_datasets.csv} & the five extracted trial comparisons\\
\bottomrule
\end{tabular}
\end{center}

\section*{Declaration of generative AI use}

The author used a large language model (Claude, Anthropic) extensively in preparing this manuscript. Its use was not confined to language editing, and is described here in full.

The model was used interactively to develop the mathematical argument, including the variance-free reformulation of the envelope inequality as the polynomial condition $\Psi\le0$, the convexity reduction of Lemma~\ref{lem:reduce}, the symmetry reduction of Lemma~\ref{lem:sym}, and the algebraic decompositions \eqref{eq:Fdecomp} and \eqref{eq:Sform} underlying the six remaining cases. It wrote the verification code (symbolic checking of the identities, exact rational enumeration, the numerical search, and the scenario panel), performed the extraction and analysis of the trial data reported in Section~\ref{sec:empirical}, conducted literature searches, and drafted the text of the manuscript.

An earlier version of the proof of Theorem~\ref{thm:global} closed six cases by machine-generated Bernstein certificates; that route was subsequently replaced, also with model assistance, by the elementary argument given here. An earlier statement of Theorem~\ref{thm:equality} asserted a global equality characterization that its proof did not support; this was identified during review and corrected to the present, weaker statement.

The author directed the research, set the questions, made all decisions about scope, framing and claims, and takes full responsibility for the content of the manuscript, including the correctness of all mathematical statements and the accuracy of all reported data. All theorems, proofs, numerical results and data extractions have been checked by the author independently of the model. The model is not an author and has not been listed as one.

\section*{Data availability statement}

The data that support the findings of this study are openly available. The IST-3 arm-level counts are published in Table~2 of the primary trial report.\cite{ist3} The shoulder-tip pain data are distributed in the \texttt{nparLD} R package\cite{nparld} and the rheumatoid arthritis data in the \texttt{multgee} R package.\cite{multgee} All code used to derive the results, together with the derived data tables and scenario-level output, is archived at [repository URL / DOI]. This includes the symbolic verification of the algebraic identities in the proof of Theorem~\ref{thm:global}, the exact-arithmetic enumeration underlying Appendix~\ref{sec:verify}, the search and scenario-generation code, and the scripts that regenerate every figure and table.

\section*{Conflict of interest}

The author declares no conflict of interest.

\section*{ORCID}

\noindent Akarin Phaibulpanich \texttt{https://orcid.org/[ORCID iD]}


\begin{thebibliography}{99}\small

\bibitem{brunner2000} Brunner E, Munzel U. The nonparametric Behrens--Fisher problem: asymptotic theory and a small-sample approximation. \emph{Biom J}. 2000;42(1):17-25.

\bibitem{neubert2007} Neubert K, Brunner E. A studentized permutation test for the non-parametric Behrens--Fisher problem. \emph{Comput Stat Data Anal}. 2007;51(10):5192-5204.

\bibitem{happ2019} Happ M, Bathke AC, Brunner E. Optimal sample size planning for the Wilcoxon--Mann--Whitney test. \emph{Stat Med}. 2019;38(3):363-375.

\bibitem{birnbaum1957} Birnbaum ZW, Klose OM. Bounds for the variance of the Mann--Whitney statistic. \emph{Ann Math Stat}. 1957;28(4):933-945.

\bibitem{noether1987} Noether GE. Sample size determination for some common nonparametric tests. \emph{J Am Stat Assoc}. 1987;82(398):645-647.

\bibitem{zhao2008} Zhao YD, Rahardja D, Qu Y. Sample size calculation for the Wilcoxon--Mann--Whitney test adjusting for ties. \emph{Stat Med}. 2008;27(3):462-468.

\bibitem{kieser2003} Kieser M, Friede T. Simple procedures for blinded sample size adjustment that do not affect the type I error rate. \emph{Stat Med}. 2003;22(23):3571-3581.

\bibitem{rustagi1961} Rustagi JS. Bounds for the variance of Mann--Whitney statistic. \emph{Ann Inst Stat Math}. 1961;13:119-126.

\bibitem{bamber1975} Bamber D. The area above the ordinal dominance graph and the area below the receiver operating characteristic graph. \emph{J Math Psychol}. 1975;12(4):387-415.

\bibitem{brunner2025} Brunner E, Konietschke F. An unbiased rank-based estimator of the Mann--Whitney variance including the case of ties. \emph{Stat Pap}. 2025;66:20.

\bibitem{schuurhuis2025} Sch\"uurhuis S, Konietschke F, Brunner E. A new approach to the nonparametric Behrens--Fisher problem with compatible confidence intervals. \emph{Biom J}. 2025;67(1):e70096.

\bibitem{poehlmann2024} P\"ohlmann A, Brunner E, Konietschke F. Sample size planning for rank-based multiple contrast tests. \emph{Biom J}. 2024;66(3):e2300240.

\bibitem{whitehead1993} Whitehead J. Sample size calculations for ordered categorical data. \emph{Stat Med}. 1993;12(24):2257-2271.

\bibitem{ist3} IST-3 collaborative group. The benefits and harms of intravenous thrombolysis with recombinant tissue plasminogen activator within 6 h of acute ischaemic stroke (the third international stroke trial [IST-3]): a randomised controlled trial. \emph{Lancet}. 2012;379(9834):2352-2363.

\bibitem{hoeffding1948} Hoeffding W. A class of statistics with asymptotically normal distribution. \emph{Ann Math Stat}. 1948;19(3):293-325.

\bibitem{bombardier1986} Bombardier C, Ware J, Russell IJ, Larson M, Chalmers A, Read JL. Auranofin therapy and quality of life in patients with rheumatoid arthritis: results of a multicenter trial. \emph{Am J Med}. 1986;81(4):565-578.

\bibitem{lumley1996} Lumley T. Generalized estimating equations for ordinal data: a note on working correlation structures. \emph{Biometrics}. 1996;52(1):354-361.

\bibitem{nparld} Noguchi K, Gel YR, Brunner E, Konietschke F. nparLD: an R software package for the nonparametric analysis of longitudinal data in factorial experiments. \emph{J Stat Softw}. 2012;50(12):1-23.

\bibitem{multgee} Touloumis A. R package multgee: a generalized estimating equations solver for multinomial responses. \emph{J Stat Softw}. 2015;64(8):1-14.

\bibitem{touloumis2013} Touloumis A, Agresti A, Kateri M. GEE for multinomial responses using a local odds ratios parameterization. \emph{Biometrics}. 2013;69(3):633-640.

\bibitem{delacre2017} Delacre M, Lakens D, Leys C. Why psychologists should by default use Welch's t-test instead of Student's t-test. \emph{Int Rev Soc Psychol}. 2017;30(1):92-101.

\bibitem{zimmerman2004} Zimmerman DW. A note on preliminary tests of equality of variances. \emph{Br J Math Stat Psychol}. 2004;57(1):173-181.

\end{thebibliography}
\end{document}